\DeclareMathAlphabet{\mathcal}{OMS}{cmsy}{m}{n}
\DeclareSymbolFont{greekletters}{OML}{cmm}{m}{it}
\DeclareMathSymbol{\pi}{\mathalpha}{greekletters}{"19}
\DeclareMathSymbol{\varphi}{\mathalpha}{greekletters}{"27}
\newtheorem{theorem}{Theorem}
\newtheorem*{theorem-non}{Theorem}
\newtheorem{lemma}{Lemma}
\newtheoremstyle{moreresults} 
{\topsep}                    
{\topsep}                    
{}                   
{}                           
{\scshape}                   
{.}                          
{.4em}                       
{}  
\theoremstyle{moreresults}
\newtheorem{remark}{Remark}
\newcommand{\tox}[2]{\{\toxic{#1}{#2}\}}
\newcommand{\utilties}[1]{\llbracket#1\rrbracket}
\newcommand{\orderings}{(\ge_i)_{i\in N}}
\newcommand{\W}{\mathcal W}	
\newcommand{\inn}{\subseteq} 
\newcommand{\pref}{\succcurlyeq} 
\newcommand{\N}{\mathcal N} 
\newcommand{\C}{\mathcal C} 
\newcommand{\scC}{\scalebox{0.85}{$\C$}}
\newcommand{\A}{\mathcal A} 
\newcommand{\on}{\operatorname}
\renewcommand*{\le}{\leqslant}
\renewcommand*{\ge}{\geqslant}
\renewcommand{\emptyset}{\varnothing}
\newcommand{\NP}{NP}
\newcommand{\threesat}{\textsc{\scalebox{0.85}{3}sat}}
\newcommand{\toxic}[2]{\ensuremath{#1\text{-}#2}}
\newcommand{\toxicc}[2]{\ensuremath{\{#1\text{-}#2\}}}
\newsavebox{\@brx}
\newcommand{\llangle}[1][]{\savebox{\@brx}{\(\m@th{#1\langle}\)}%
	\mathopen{\copy\@brx\kern-0.5\wd\@brx\usebox{\@brx}}}
\newcommand{\rrangle}[1][]{\savebox{\@brx}{\(\m@th{#1\rangle}\)}%
	\mathclose{\copy\@brx\kern-0.5\wd\@brx\usebox{\@brx}}}
\newcommand{\longto}{\longrightarrow}
\newcommand{\oset}[3][0ex]{%
	\mathrel{\mathop{#3}\limits^{
			\vbox to#1{\kern-2\ex@
				\hbox{$\scriptstyle#2$}\vss}}}}
\newcommand{\SSNS}{\textsc{ssns}}
\newcommand{\SNS}{\textsc{sns}}
\newcommand{\SIS}{\textsc{sis}}
\newcommand{\SC}{\textsc{scr}}
\newcommand{\Core}{\textsc{cr}}
\newcommand{\IS}{\textsc{is}}
\newcommand{\NS}{\textsc{ns}}
\newcommand{\property}[1]{\vspace{4pt}\noindent\hangindent=0.25cm\textit{#1.}} 
\newcommand{\class}[1]{\vspace{3pt}\noindent\hangindent=0.2cm\textbf{#1.}} 
\title{Simple Causes of Complexity in Hedonic Games\thanks{Presented at IJCAI 2015, Buenos Aires. This version includes full proofs in the appendix.}} 
\author{Dominik Peters \and Edith Elkind \\ Department of Computer Science\\ University of Oxford, UK \\ $\{$dominik.peters, edith.elkind$\}$@cs.ox.ac.uk}
\begin{document}

\maketitle

\begin{abstract}
  Hedonic games provide a natural model of coalition formation among self-interested agents. The associated problem 
of finding stable outcomes in such games has been extensively studied. In this paper, we identify simple conditions on 
expressivity of hedonic games that are sufficient for 
the problem of checking whether a given game admits a stable outcome
to be computationally hard. 
Somewhat surprisingly, these conditions are very mild and intuitive.
Our results apply to 
a wide range of stability concepts (core stability, individual stability, Nash stability, etc.) and to many known 
formalisms for hedonic games (additively separable games, games with $\mathcal W$-preferences, fractional hedonic 
games, etc.), and unify and extend known results for these formalisms. They also have broader applicability: 
for several classes of hedonic games whose computational complexity has not been explored
in prior work,
we show that our framework immediately implies a number of hardness results for them.
\end{abstract}

\section{Introduction}\label{sec:intro}
Hedonic games \cite{Dreze1980,Banerjee2001,Bogomolnaia2002} 
provide an elegant and versatile model of coalition formation among
strategic agents. In such games, each agent has preferences over {\em coalitions}
(subsets of players) that she can be a part of, and an outcome of the game
is a partition of agents into coalitions. Clearly, the quality of an outcome
depends on how well it reflects the agents' preferences. In particular,
it is desirable to have outcomes that are {\em stable}, i.e., 
do not offer the agents an opportunity to profitably deviate. Many different concepts
of stability have been proposed in the hedonic games literature (see Section~\ref{sec:prelim}
for a brief summary, and \cite{azizsavani} for an in-depth discussion), 
and for each of them a natural computational 
question is whether a given game admits an outcome that is stable in that sense.

\newcommand{\knownP}{(P)}
\newcommand{\knownNPC}{NP-c.}
\newcommand{\knownNPH}{NP-h.}
\newcommand{\newNPC}{\colorbox{black!17}{NP-c.}}
\newcommand{\newNPH}{\colorbox{black!17}{NP-h.}}
\newcommand{\newNPCwasopen}{\colorbox{black!17}{NP-c.}}
\newcommand{\newNPHwasopen}{\colorbox{black!17}{NP-h.}}
\newcommand{\open}{}
\newcommand{\trivial}{(+)}
\renewcommand{\arraystretch}{1.3}	
\begin{table}[t]
	
	\centering
	
	{\footnotesize
		\setlength{\tabcolsep}{0.7pt}
\begin{tabular}{lccccc} 
	\toprule
	& {\large\sc sns} & {\large\sc scr} & {\large\sc cr} & {\large\sc ns} & {\large\sc is} \\ \midrule

IRCL of length $\le 3$ & \newNPH & \open & \knownNPC & \knownNPC & \knownNPC \\
	
IRCL of length $\le 9$ & \newNPH & \newNPC & \knownNPC & \knownNPC & \knownNPC \\

Hedonic Coalition Nets & \newNPH & \newNPH & \knownNPH & \newNPC & \newNPC \\

Stable Marriages (\textsc{SMI}) & \open & \open & \knownP & \newNPC & \knownP \\
	
$\mathcal{W}$-preferences (no ties) & \open & \knownP & \knownP & \knownNPC & \newNPC \\

$\mathcal{W}$-preferences & \newNPH & \open & \knownNPC & \knownNPC & \knownNPC \\

$\mathcal{WB}$-preferences (no ties)\:\:\:& \open & \knownP & \knownP & \newNPCwasopen & \newNPC \\

$\mathcal{WB}$-preferences & \newNPH & \open & \newNPCwasopen 
 & \newNPCwasopen & \newNPCwasopen \\

B- \& W-hedonic games & \newNPH & \open & \newNPH & \newNPC & \newNPC \\

Additively separable & \newNPH & \knownNPH & \knownNPH & \knownNPC & \knownNPC \\

Fractional hedonic games & \newNPH & \newNPH & \knownNPH & \knownNPC & \knownNPC \\

Social FHGs & \open & \newNPH & \newNPHwasopen & \trivial & \trivial \\

Median & \open & \newNPHwasopen & \newNPHwasopen & \open & \open \\

Midrange ($\raisebox{0.85pt}{$\scriptstyle\frac12$}\mathcal B + \raisebox{0.85pt}{$\scriptstyle\frac12$}\mathcal W$) & \newNPH & \open & \newNPH & \newNPC & \newNPC \\

3-Approval & \newNPH & \open & \newNPH & \newNPC & \newNPC \\

4-Approval & \newNPH & \newNPH & \newNPH & \newNPC & \newNPC \\

\bottomrule
\end{tabular}}
\caption{Some of the hardness results implied by our framework for the problem of identifying hedonic games with stable outcomes. 
Gray entries are results that have not appeared in the literature before. 
(P) indicates known polynomial-time algorithms, (+) means that a stable outcome always exists. See Section \ref{sec:applications} for details.}
\label{table:results}
\vspace{-6pt}
\end{table}

The complexity of this question depends on how the game is represented:
while every hedonic game can be described by explicitly listing each agent's preference
relation over all coalitions that may contain her, in recent years there 
has been a considerable amount of research on {\em succinct} representation
formalisms for hedonic games, i.e., ones where a game description
size scales polynomially 
with the number of agents~$n$. Typically, such formalisms
are not universally expressive, but capture important
classes of hedonic games. For instance, if the utility that an agent 
assigns to a coalition is given by the sum/average/minimum/maximum of the utilities 
she assigns to individual members of that coalition, the entire game
can be described by $n(n-1)$ numbers (such games are known as, respectively,
additively separable games \cite{Bogomolnaia2002}, fractional hedonic games \cite{Aziz2014a}, 
and games with 
${\cal W}$- and ${\cal B}$-preferences \cite{Hajdukova2003,Cechlarova2004}). There are also representation
formalisms that are universally expressive (and hence exponentially verbose
in the worst case), but provide succinct descriptions of hedonic games 
that have certain structural properties; examples include Individually Rational 
Coalition Lists \cite{Ballester2004} and Hedonic Coalition Nets \cite{Elkind2009}. 
The complexity of stability-related problems under these and other representations
for hedonic games has been investigated by a number of researchers 
(see \cite{WoegingerSurvey} for a survey); with a few exceptions, checking whether
a game admits a stable outcome turns out to be computationally hard.

In this paper, we unify and extend several known hardness results for this family of problems 
in order to uncover common causes of complexity of stability-related questions in hedonic games. 
In their simplest form, our results imply that if in a given representation formalism, agents are 
able to rank coalitions of size two in any way they wish, and if agents are 
to some extent averse to the presence of enemies, then the problem of checking whether a game admits a stable 
outcome is NP-hard. The precise meaning of being averse to enemies depends on the stability concept in question. We 
also introduce intuitively appealing conditions on how agents rank coalitions of size three, which turn 
out to entail NP-hardness even if the underlying preferences are strict.
Our approach enables us to automatically derive
new hardness results for hedonic games: instead of coming up with a hardness
reduction, one can simply check whether the representation in question
satisfies the relevant conditions on enemy-aversion and coalitions of size two or three. 
By doing so, we answer several questions that
were left open by prior work, and substantially contribute to the understanding
of computational complexity of 
somewhat less explored 
solution concepts: to the best of our knowledge, we are the first to obtain NP-hardness results for strong
Nash stability ({\sc sns}), strict strong Nash stability ({\sc ssns}), 
and strong individual stability ({\sc sis}). 

To provide further evidence of the power of our approach,
we also consider several classes of hedonic games whose complexity
has not been investigated before, and derive NP-hardness results for them using
our methodology. Perhaps the most interesting of them is the class of 
{\em median games}, proposed by \cite{Hajdukova2006}, where each agent assigns a utility to every other agent,
and her utility for a coalition is the utility she assigns to the median agent in that 
coalition. 

The complexity results implied by our analysis are summarized 
in Table~\ref{table:results}. However, we believe that the sufficient conditions
for hardness identified in our work are at least as important as the specific
new results we have established. Indeed, these conditions indicate which
additional constraints should be placed on a representation formalism to avoid
the complexity trap,
and may guide researchers towards identifying formalisms that adequately describe their 
application scenario, yet admit efficient algorithms for finding stable outcomes.

\section{Preliminaries}\label{sec:prelim}
Given a finite set of agents $N = \{1,\dots,n\}$, a \textit{hedonic game} is a pair 
$G = \langle N, (\pref_i)_{i\in N}\rangle$, where 
$\pref_i$ is a complete and transitive preference relation over $\N_i = \{ S \inn N : i \in S \}$. 
We write $S\succ_i T$ when $S\pref_i T$, but $T\not\pref_i S$.
A \textit{class $\C$ of hedonic games} is any collection of hedonic games. 
We say that a class $\C$ is \textit{polynomially representable} 
if there exists a polynomial $p(x)$ and a poly-time algorithm $A$ such that 
each $\langle N, (\pref_i)_{i\in N}\rangle\in\C$ can be represented by a binary string of length at most $p(|N|)$,
and, given this string, an agent $i\in N$, and a pair of coalitions $S,T\in \N_i$, 
algorithm $A$ can decide whether $S\pref_i T$. For example, additively separable hedonic games 
mentioned in Section~1 form a polynomially representable class.

An {\em outcome} of a hedonic game is a partition $\pi$ of $N$ into disjoint coalitions. We write $\pi(i)$ for the 
coalition of $\pi$ that contains $i$. For partitions $\pi$ and $\pi'$, we write $\pi \pref_i \pi'$ to mean  
$\pi(i) \pref_i \pi'(i)$. 

We are mainly interested in the \textit{stability} of a given partition $\pi$ of $N$. 
We will consider seven stability concepts for hedonic games: two that are based on 
individual deviations, and five that are based on group deviations. The former group comprises \textit{Nash stability} 
(\NS) and \textit{individual stability} (\IS). A partition $\pi$ is {\NS} if no player can benefit from moving to 
another (possibly empty) coalition $S$ in $\pi$, i.e., $\pi(i)\pref_i S\cup\{i\}$ for all $S\in\pi\cup\{\emptyset\}$.
Partition $\pi$ satisfies {\IS} if no player can make such a beneficial move without making an 
agent in $S$ worse off, i.e., for each $S\in\pi\cup\{\emptyset\}$ it holds that 
$\pi(i)\pref_i S\cup\{i\}$ or $S\succ_j S\cup\{i\}$ for some $j\in S$.

The classic solution concept for group deviations is the \textit{core} (\Core). We say that a non-empty 
coalition $S$ {\em \Core-blocks} $\pi$ if $S\succ_i \pi(i)$ for all $i\in S$; 
it {\SC-blocks}  
$\pi$ if $S\pref_i \pi(i)$ for all $i\in S$ and, moreover, $S\succ_i \pi(i)$ for some $i\in S$.
If no coalition \Core-blocks $\pi$, it is in the {\em core} (\Core); if no coalition \SC-blocks it,
it is in the {\em strict core} (\SC).

\setlength\intextsep{0pt}
\begin{wrapfigure}[9]{r}{0pt}
        \hspace{-12pt}
        \scalebox{1}{\begin{tikzpicture}[scale=0.48]
		\tikzstyle{pfeil}=[->,draw]
		\tikzstyle{onlytext}=[font=\scshape, inner sep=2.3pt]
		
		\node[onlytext] (SSNS) at (4,4.5) {ssns};
		\node[onlytext] (SNS) at (2,3) {sns};
		\node[onlytext] (NS) at (2,1.5) {ns};
		\node[onlytext] (SIS) at (4,1.5) {sis};
		\node[onlytext] (SCR) at (6,3) {scr};
		\node[onlytext] (IS) at (2,0) {is};
		\node[onlytext] (IR) at (4,-1.5) {ir};
		\node[onlytext] (CR) at (6,0) {cr};
		
		\draw[pfeil] (IS) to (IR);
		\draw[pfeil] (CR) to (IR);
		\draw[pfeil] (SSNS) to (SCR);
		\draw[pfeil] (SSNS) to (SNS);
		\draw[pfeil] (SNS) to (NS);
		\draw[pfeil] (SIS) to (IS);
		\draw[pfeil] (SCR) to (SIS);
		\draw[pfeil] (SIS) to (CR);
		\draw[pfeil] (SNS) to (SIS);
		\draw[pfeil] (NS) to (IS);
		\draw[pfeil] (SCR) to (CR);
\end{tikzpicture}}
\end{wrapfigure}

Karakaya~\shortcite{Karakaya2011} introduced \textit{strong Nash stability} (\SNS), 
and Aziz and Brandl~\shortcite{Aziz2012} introduced the derived 
notions of \textit{strict strong Nash stability} (\SSNS) and \textit{strong individual stability} (\SIS). 
These solution concepts deal with deviations where 
the deviators do not necessarily form a single coalition.
Given two partitions $\pi$, $\pi'$, 
we say that a coalition $H\inn N$ can {\em reach} $\pi'$ from $\pi$ 
if for all 
$i,j\not\in H$ we have $\pi(i) = \pi(j)$ if and only if 
$\pi'(i) = \pi'(j)$. 
Coalition $H$ {\em \SSNS-blocks} $\pi$ if it can reach 
some $\pi'$ with $\pi'\pref_i\pi$ for all $i\in H$ and $\pi'\succ_i\pi$ for some $i\in H$. 
If $H$ can reach some $\pi'$ with $\pi'\succ_i\pi$ for all $i\in H$
then $H$ is said to {\em \SNS-block} $\pi$. If $H$ \SNS-blocks $\pi$ by reaching $\pi'$ and, moreover,
for each $i\in H$ and each $j\in \pi'(i)$ we have $\pi' \pref_j \pi$, 
then $H$ is said to {\em \SIS-block}~$\pi$. A partition 
$\pi$ is $\alpha$-stable (where $\alpha\in\{\SSNS,\SNS,\SIS\}$) if no coalition $\alpha$-blocks it.
Intuitively, \SNS-blocking coalitions allow groups of agents to swap places with each other. For \SIS-blocking 
coalitions, agents joined by a deviator must consent to the changes.

The diagram above shows implication relationships among these concepts. A partition that is \SSNS-stable is also 
stable under every other solution concept considered here. A coalition $S\ni i$ is \textit{individually 
rational} ({\sc ir}) for $i$ if $S\pref_i \{i\}$. A partition $\pi$ is said to be {\sc ir} if 
$\pi(i)$ is {\sc ir} for all $i\in N$.

\section{Properties of Preferences}\label{sec:properties}
Our hardness results require a given class $\C$ of hedonic games to be expressive enough to include hard instances. 
To this end, agents should have some freedom in how they order small coalitions. 
Our results apply
to formalisms that enable each agent $i$ to express arbitrary 
preferences
over coalitions of the form $\{i,j\}$, as well as satisfy a few other constraints.
Thus, in a sense, our results are about the 
hardness of finding stable outcomes in hedonic games obtained 
by \emph{lifting} preferences over individual players to preferences over coalitions.

We associate each agent $i\in N$ with a complete and transitive order $\ge_i$ over 
$N$. We interpret $\ge_i$ as $i$'s preference order over the set of players. 
We write $j >_i k$ if $j\ge_i k$ but not $k\ge_i j$, and we write
$j\sim_i k$ if both $j\ge_i k$ and $k\ge_i j$.
We call $F_i=\{j\neq i : j\ge_i i \}$ and $E_i=\{j\neq i : j<_i i \}$
the {\em friends} and the {\em enemies} of~$i$. 
In what follows, it will not matter how $\ge_i$ 
orders $E_i$---only its restriction on $F_i$ will be of interest.

We now describe a series of properties that 
relate $i$'s preferences $\pref_i$ over the coalitions in $\N_i$  
to her preferences $\ge_i$ over the agents in $N$. These properties express various ways in which $\pref_i$ can be said 
to \textit{extend} $\ge_i$. The numerical examples in brackets aim to illustrate the intuition behind
these properties.

\vspace{3pt}
\property{Consistent on pairs}
For all $j,k \in F_i \cup \{i\}$
it holds that $\{i,j\}\pref_i \{i,k\}$ iff $j\ge_i k$.

\property{Monotone on triangles {\small (`$\mathit{7+6>7+5}$')}}
If $j,j',k,k'\in F_i$ are such that $j \ge_i j' >_i k >_i k'$, then $\{i,j,k\}\succ_i \{i,j',k'\}$.

\property{Triangle-appreciating  {\small (`$\mathit{7+5>7}$')}}
Two almost equally good friends together are preferable to the better friend alone: 
If $j,k,\ell\in F_i$ are ranked $j >_i k >_i \ell$ and they are immediate successors under $>_i$, 
then $\{i,j,\ell\} \succ_i \{i,j\}$.

\vspace{5pt}
\noindent 

Only few polynomial-time algorithms for finding stable outcomes in hedonic games are known, mainly 
confined to matching problems and the (structurally similar) $\mathcal W$-hedonic games. Notably these 
classes of games fail to 
be triangle-appreciating, and in view of our results in Section~\ref{sec:claims} 
this is a key reason why they admit easiness results.

The following properties express that agents do not like coalitions that contain too many enemies.
\vspace{3pt}

\property{$\{\toxic{a}{b}\}$-toxic} 
If $|S\cap F_i|=a$, but $|S\cap E_i|\ge b$ then $\{i\} \pref_i S$.

\property{Strictly $\{\toxic{a}{b}\}$-toxic} 
Same as above with $\{i\} \succ_i S$.

\property{Weakly $\{\toxic{a}{b}\}$-toxic}
Same as above with $\{i,j\} \succ_i S$ for all $j\in F_i$.

\property{Intolerant in triangles}
If $E_i'\inn E_i$ is non-empty and $j,k\in F_i$ are distinct then $\{i,j,k\}\succ_i \{i,j,k\}\cup E_i'$.

\vspace{5pt}
\noindent We write `(strictly/weakly) $\{ \toxic{a_1}{b_1},\dots, \toxic{a_m}{b_m} \}$-toxic' for preferences that are (strictly/weakly) $\{\toxic{a_t}{b_t}\}$-toxic for $t\hspace{-0.1pt}=1,\dots,m$.

Given a collection $(\ge_i)_{i\in N}$ of orderings, we say that a hedonic game $\langle N, (\pref_i)_{i\in N}\rangle$ 
satisfies one of the properties above if each $\pref_i$ satisfies it with respect to $\ge_i$. 
We say that the collection 
is \textit{strict} if each $\ge_i$ is antisymmetric, so $j\neq k$ implies $j\not\sim_i k$.
The collection is \textit{mutual} if $j\in F_i$ if and only if $i\in F_j$ for all $i,j$.
For a mutual collection of orderings, we may consider the \textit{friendship graph} with vertex set $N$, 
where an (unweighted) edge connects mutual friends.
We will use standard terminology of graph theory when talking about hedonic games,
and, in particular, speak of cliques, trees, and cycles of agents.

\section{Hardness Results}\label{sec:claims}
\label{sec:results}
Let $\C$ be a polynomially representable class of hedonic games. For every stability concept $\alpha$
defined in Section~\ref{sec:prelim}, we will consider the 
following decision problem associated with $\C$.

\vspace{0.3em}
\noindent $\alpha$-\textsc{existence for \scC}\\
\textit{Instance:}\hspace{1pt} Game $\langle N, (\pref_i)_{i\in N}\rangle$ from $\C$ in its binary encoding. \\
\textit{Question}: Is there an $\alpha$-stable partition $\pi$ of $N$?
\vspace{0.35em}

\noindent To avoid difficulties with binary
representations that are very short,
we will assume that the binary encoding of $\langle N, (\pref_i)_{i\in N}\rangle$
lists the names of agents in $N$, and hence contains at least $|N|$ bits. 
Furthermore, when in the following theorems we assume that $\C$ contains 
various hedonic games $\langle N, (\pref_i)_{i\in N}\rangle$ 
derived from orderings $(\ge_i)_{i\in N}$, we require that such games (i.e., their binary descriptions) 
can be constructed in time polynomial in $|N|$; this property is necessary for our
hardness reductions to work in polynomial time and is satisfied by all classes of hedonic games
considered in this paper.

Our first result has mild assumptions and applies to a large number of classes $\C$.

\begin{theorem} 
	\label{thm:core-with-ties}
	\textsc{cr-existence for \scC} is \NP-hard if for all~$N$
     and every mutual collection of 
     orderings $(\ge_i)_{i\in N}$ in which each agent has at most 3 friends,
     there is a game $\langle N, (\pref_i)_{i\in N} \rangle\in\C$ 
     that is consistent on pairs, $\{\toxic01\}$-toxic and weakly $\{\toxic11, \toxic22\}$-toxic with respect to $(\ge_i)_{i\in N}$.
\end{theorem}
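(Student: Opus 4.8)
The plan is to reduce from an NP-hard problem about graphs --- the natural candidate is \textsc{exact cover by 3-sets} or, more likely given the ``3 friends'' restriction, a restricted version of 3-dimensional matching or of \textsc{3sat} with bounded occurrences, but the cleanest fit is probably a graph problem on cubic graphs such as \textsc{cubic vertex cover} or, most plausibly, the question of whether a given cubic graph has a certain kind of partition (e.g.\ a perfect matching, or a partition into paths/triangles). Since the hypothesis only gives us control over the \emph{friendship graph} (each vertex of degree $\le 3$) together with the toxicity axioms, the encoded combinatorial object must be something readable off the friendship graph. I would therefore start from a known NP-hard problem of the form ``does this cubic (or subcubic) graph $H$ admit a partition of its vertices into pieces of a prescribed shape'', set $N = V(H)$, and let $(\ge_i)_{i\in N}$ be any mutual, \emph{non-strict} collection whose friendship graph is exactly $H$ --- concretely, declare $j \ge_i i$ (so $j$ is a friend of $i$) iff $ij \in E(H)$, and make all friends of $i$ tied with $i$ and with each other. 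By the hypothesis $\C$ then contains a game $G$ that is consistent on pairs, $\{\toxic01\}$-toxic, and weakly $\{\toxic11,\toxic22\}$-toxic with respect to these orderings, and this game can be built in polynomial time; the reduction outputs $G$.

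Next I would pin down what the three axioms force about core-stable partitions of $G$. The $\{\toxic01\}$-toxic condition says $\{i\}\pref_i S$ whenever $S$ contains an enemy of $i$ but no friend; the weak $\{\toxic11\}$ condition says $\{i,j\}\succ_i S$ whenever $S$ has exactly one friend but at least one enemy; the weak $\{\toxic22\}$ condition says $\{i,j\}\succ_i S$ whenever $S$ has (at least) two friends but (at least) two enemies. Combining these: in any individually-rational, core-stable partition, no agent $i$ can be grouped with any enemy unless doing so buys $i$ at least three friends --- but $i$ has at most $3$ friends and the toxicity conditions cover the cases of $0$, $1$, and $2$ friends paired with enemies, so essentially every agent's coalition in a core-stable $\pi$ must be a subset of $\{i\}\cup F_i$, i.e.\ a clique (indeed a connected set) in the friendship graph $H$. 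Moreover consistency on pairs plus the toxicity axioms let two mutual friends always profitably form a pair $\{i,j\}$ against a singleton or against any ``toxic'' coalition, so singletons are ruled out whenever an agent has a friend, and one can check which larger friend-cliques are blocked by smaller sub-cliques using consistency on pairs together with (possibly) a free choice of how $\pref_i$ ranks coalitions it isn't constrained on --- here is where I would need to be slightly careful, and I would exploit the fact that the hypothesis only demands \emph{some} game in $\C$ with the stated properties, so I get to assume the ``free'' rankings are set adversarially in whatever way the known hardness construction for cubic graphs needs.

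The heart of the argument is then a correspondence: core-stable partitions of $G$ exist if and only if $H$ admits the prescribed partition (into edges, or into edges-and-triangles, etc.). The ``$\Leftarrow$'' direction takes a good partition of $H$ and argues it is core-stable in $G$, using the toxicity axioms to dismiss any blocking coalition that touches an enemy edge and consistency on pairs to dismiss blocking coalitions inside a friend-clique. The ``$\Rightarrow$'' direction takes a core-stable $\pi$, uses the toxicity axioms to show every $\pi(i)$ is a clique in $H$, and uses the blocking power of mutual-friend pairs to show the clique sizes are exactly what the target problem requires, yielding a valid partition of $H$. The main obstacle I anticipate is the ``$\Rightarrow$'' fine print: with non-strict preferences and only partial control over $\pref_i$, one must ensure there is no core-stable partition that uses, say, a large friend-clique where the target problem wants only edges --- this is exactly why the three toxicity axioms and the ``at most $3$ friends'' hypothesis are chosen as they are, and getting the bookkeeping on which sub-cliques $\Core$-block which cliques to line up precisely with the chosen base problem (so that the equivalence is tight in both directions) is the step that needs real care rather than routine calculation. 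If the arithmetic of clique sizes does not close up cleanly for \textsc{cubic vertex cover}, I would switch the base problem to ``partition a cubic graph into triangles'' or to a suitable exact-cover variant whose feasible configurations match the cliques that survive the toxicity filter.
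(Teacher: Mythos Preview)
Your proposal contains a decisive quantifier error. You write that you would ``assume the `free' rankings are set adversarially in whatever way the known hardness construction \dots\ needs.'' But the hypothesis of the theorem only guarantees that \emph{some} game in $\C$ satisfies the listed axioms; the reduction must output that game and then the correctness proof must work no matter how that game ranks coalitions that are not constrained by the axioms. You do not get to choose the unconstrained preferences --- you must argue using only consistency on pairs, $\{\toxic01\}$-toxicity, and weak $\{\toxic11,\toxic22\}$-toxicity. This single point already sinks the approach as stated, because the axioms say nothing about how a triangle of mutual friends compares to a pair, so neither direction of your intended equivalence (``core-stable $\Leftrightarrow$ good partition of $H$'') can be closed without extra information you are not entitled to.

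Relatedly, declaring all friends of each agent tied with one another throws away exactly the leverage the paper uses. The paper reduces from a bounded-occurrence \textsc{3sat} and, crucially, uses \emph{strict} preferences among friends to build an instability gadget: a $9$-cycle in which each agent strictly prefers the clockwise successor to the clockwise predecessor. By consistency on pairs, this asymmetry alone forces that no partition of the $9$-cycle into pairs and singletons is core-stable (the odd length prevents a perfect matching, and any agent paired ``backwards'' or left single participates in a blocking pair). Your all-ties construction has no such mechanism: with every friend-pair indifferent to every other, there is no way to force the core to be empty on a gadget using only the stated axioms. The paper then glues these $9$-gons (one per clause) to a variable gadget so that stability can only be rescued by matching one clause agent to a true literal; the friendship graph is designed to have girth~$6$ and no cubic subgraph, which together with the toxicity axioms is exactly enough to rule out \emph{every} blocking coalition of size~$\ge 3$ in the ``satisfiable $\Rightarrow$ stable'' direction, regardless of how the game ranks larger friend-coalitions. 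Your toxicity bookkeeping also slips: a coalition with two friends and \emph{one} enemy of $i$ is covered by none of the three axioms, so your claim that ``no agent can be grouped with any enemy unless it buys at least three friends'' does not follow.
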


\begin{remark}
Under the same set of
conditions \textsc{sis-existence for~\scC} is also \NP-hard;  
we obtain a hardness result for \textsc{sns-existence for \scC} by
strengthening weak $\{\toxic11\}$-toxicity to $\{\toxic11\}$-toxicity.
\end{remark}

Effectively, Theorem~\ref{thm:core-with-ties} says that if agents are allowed to rank 
pairs as they wish, and if they do not have to like everyone, 
then finding a core-stable outcome is hard.

\setlength\intextsep{0pt}
\begin{wrapfigure}{r}{0pt}
	\hspace{-15pt}
	\includegraphics{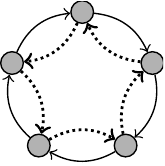}
\end{wrapfigure}
The assumptions are chosen so as to guarantee that a game like the pentagon displayed on the right has empty core. 
In this game, each agent has exactly two friends, the clockwise successor being preferred to the clockwise predecessor. 
All other agents are enemies. It can be checked that if agents' preferences 
satisfy weak $\{\toxic11, \toxic22\}$-toxicity then this game 
has empty core. We use the 9-player version of this game as a gadget
in our hardness reductions
(see Figure~\ref{fig:reduction-graphs}).

A similar result holds for solution concepts based on individual deviations.

\begin{theorem} 
	\label{thm:individual-stability}
	\textsc{ns-} and \textsc{is-existence for \scC} are \NP-complete if for all $N$
    and every \emph{strict} and mutual collection of orderings $(\ge_i)_{i\in N}$ in which each agent has at most 3 friends,
    there is a game $\langle N, (\pref_i)_{i\in N} \rangle\in\C$
    that is consistent on pairs and strictly $\{\toxic01,\toxic11, \toxic25\}$-toxic with respect to $(\ge_i)_{i\in N}$.
\end{theorem}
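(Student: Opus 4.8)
The plan is to establish membership in \NP\ and then reduce from a suitable \NP-complete problem. Membership is immediate: a partition $\pi$ is a polynomial-size certificate, and \NS- (resp.\ \IS-) stability of $\pi$ is checked by testing, for each of the $|N|$ agents $i$ and each of the at most $|N|$ coalitions $S\in\pi\cup\{\emptyset\}$, whether $\pi(i)\pref_i S\cup\{i\}$ (and, for \IS, the consent clause for members of $S$), which is polynomial by polynomial representability of $\C$ --- unlike the core, only the coalitions already present in $\pi$ need to be examined. For hardness I would reduce from a bounded-occurrence variant of \threesat, constructing a strict mutual collection $(\ge_i)_{i\in N}$ in which every agent has at most $3$ friends, such that the game guaranteed to lie in $\C$ has an \NS-stable partition iff the formula is satisfiable, and has \emph{no} \IS-stable partition whenever it is unsatisfiable. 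Since an \NS-stable partition is always \IS-stable, this one reduction yields \NP-hardness of both \textsc{ns-} and \textsc{is-existence for \scC}.

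The heart of the argument is a parity gadget. First I would show that, for every lift obeying the hypotheses, the friendship $9$-cycle (each agent preferring her clockwise successor to her clockwise predecessor, all others enemies) has no \IS-stable partition. By consistency on pairs together with $\{\toxic{0}{1}\}$- and $\{\toxic{1}{1}\}$-toxicity, any coalition meeting the cycle in an \IS-stable partition must be an arc of length at most $2$: a coalition placing some vertex with a non-neighbour, or with exactly one neighbour plus the vertex two steps further along, is toxic for her, so she strictly prefers $\{i\}$, and moving to $\emptyset$ is always an \IS-deviation (no member of $\emptyset$ can object). As $9$ is odd, such a partition leaves a singleton $v$; its cycle-predecessor $u$ is then in a coalition strictly worse for her than $\{u,v\}$ (by pair-consistency, since $v$ is among her two friends and is the one she prefers), while $v$, gaining a friend, consents --- an \IS-deviation. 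Conversely, I would observe that ``pulling out'' any one vertex turns the remaining cycle into a disjoint union of paths and isolated vertices, each of which is stabilisable: a (near-)perfect arc-matching is \NS-stable because an agent tempted to chase her successor would land in a two-friend coalition containing an enemy, which is toxic --- this is where $\{\toxic{2}{5}\}$-toxicity is calibrated, since it just barely rules out the grand coalition of an $8$-vertex path.

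The full construction then has one clause gadget per clause --- a $9$-cycle three of whose vertices are ``literal vertices'' that may alternatively be absorbed into a variable gadget --- and one variable gadget per variable, engineered to admit exactly two \IS-stable local phases: a ``true'' phase that docks all positive-occurrence literal vertices and a ``false'' phase that docks all negative-occurrence ones (short even cycles with pendant docking vertices, kept within degree $3$ using the occurrence bound of the source formula). If the formula is satisfiable, setting each variable gadget to the phase dictated by a satisfying assignment pulls at least one literal vertex out of every clause cycle, each clause gadget collapses to a stabilisable union of paths, and stitching these local patterns together yields a globally \NS-stable partition. Conversely, the rigidity imposed by consistency on pairs and $\{\toxic{0}{1},\toxic{1}{1},\toxic{2}{5}\}$-toxicity forces any \IS-stable partition to decompose gadget by gadget into these local phases, so the variable phases read off a consistent truth assignment under which every clause --- its $9$-cycle being stabilised, hence missing a literal vertex --- is satisfied. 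All gadgets have constant size, so the reduction runs in polynomial time and the required games lie in $\C$ by hypothesis.

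The main obstacle is precisely this rigidity claim: I must show there is \emph{no} \IS-stable partition outside the intended family, which means checking every coalition the bounded-degree friendship graph could host against the three toxicity thresholds (and this is where the specific value $5$ earns its keep). The \IS-direction is more delicate than the \NS-one because, in the unsatisfiable case, each destabilising move I invoke --- a member of a toxic coalition departing to $\emptyset$, a successor-chaser joining a singleton, or the analogous moves inside a mis-configured variable gadget --- must be verified to be consented to by the agents it joins. Designing the variable gadget so that it genuinely has only the two clean stable phases (no ``mixed'' stable configuration), while simultaneously respecting the degree-$3$ bound and the occurrence bound, is the remaining point that requires care.
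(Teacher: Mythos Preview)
Your architecture is right and matches the paper at the top level: \NP-membership via certificate checking, a single reduction from bounded-occurrence \threesat\ that works for both concepts because \NS\ implies \IS, and the odd-cycle parity argument showing a directed $9$-gon has no \IS-stable partition. Where you diverge is in the role of the $9$-gon, and this is where the gap lies.

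In the paper, the $9$-gon is a \emph{variable} gadget, not a clause gadget (see Figure~1(c)). Each clause is a single agent $c$ whose three friends are its literal occurrences; each variable carries a chain $x_{\text{main}}$--$x_{\text{stalker}}$--$x^1$ into a private $9$-gon $x^1,\dots,x^9$. The $9$-gon forces $x_{\text{stalker}}$ to be paired with $x^1$, which forces $x_{\text{main}}$ (who would otherwise be joined by $x_{\text{stalker}}$) to be paired with one of $x_1,\overline{x}_1$---this is the truth choice. The crucial structural consequence is that the whole friendship graph has girth at least $8$. Given girth $\ge 8$, strict $\{\toxic{2}{5}\}$-toxicity is exactly enough to prove the clean lemma ``every coalition in a stable partition has size $1$ or $2$'': any larger coalition in which every member has two friends contains a cycle, hence has $\ge 8$ members, hence some member has $2$ friends and $\ge 5$ enemies and strictly prefers to leave for $\emptyset$. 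The satisfying assignment is then read off from the pairs.

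Your plan instead places the $9$-gon on the clause side (the Theorem~1 layout) and leaves the variable gadget as ``short even cycles with pendant docking vertices''. This is where $\{\toxic{2}{5}\}$ may fail you: if the variable gadget contains a cycle of length $<8$ (Theorem~1's variable gadget, for instance, has girth $6$), then the coalition consisting of that cycle gives each member $2$ friends and fewer than $5$ enemies, and \emph{none} of the three toxicity hypotheses excludes it. You lose the size-$\le 2$ lemma, and the stable $\Rightarrow$ satisfiable direction no longer goes through as you describe. Your remark that $\{\toxic{2}{5}\}$ ``just barely rules out the grand coalition of an $8$-vertex path'' is the right intuition, but it is needed in the converse direction to bound coalition sizes globally, not in the forward direction to certify the intended matching. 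To rescue your route you would have to exhibit a concrete variable gadget with exactly two stable phases, maximum degree~$3$, \emph{and} no cycle shorter than $8$ anywhere in the combined friendship graph (including cycles through clause players); the paper avoids this design problem by shifting the $9$-gon to the variable side and making each clause a single vertex.
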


In the case of $\textsc{ns-existence}$, the theorem remains true even if the orderings $(\ge_i)_{i\in N}$ 
are strict and \textit{bipartite} (but not mutual), i.e.\ the friendship graph is bipartite. Thus, its 
conclusion also applies to $\textsc{ns-existence}$ for the stable marriage problem with unacceptabilities. For the 
case with ties allowed, this result is also obtained by Aziz~\shortcite{Aziz2013b}.

The reduction establishing Theorem~\ref{thm:core-with-ties} makes essential use of indifferences in the underlying 
orderings $(\ge_i)_{i\in N}$ (this is also the reason why it does not go through for the strict core). 
To cut off this cause of 
hardness, we need to make use of conditions on triangles. 

\begin{theorem} 
	\label{thm:strict-core-no-ties}
	\textsc{cr-} and \textsc{scr-existence for \scC} are 
	\NP-hard if for all $N$
         and every collection of \emph{strict} and mutual orderings $(\ge_i)_{i\in N}$
         in which each agent has at most 4 friends,
        there is a game $\langle N, (\pref_i)_{i\in N} \rangle\in\C$
        that is consistent on pairs, triangle-appreciating, monotone on triangles, $\{\toxic01 \}$-toxic, weakly $\{\toxic11,\toxic22,\toxic33\}$-toxic, and intolerant in triangles
        with respect to $(\ge_i)_{i\in N}$.
\end{theorem}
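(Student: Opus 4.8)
The natural approach is a polynomial-time reduction from an \NP-hard problem — concretely from a bounded-occurrence variant of \threesat (which remains \NP-hard and keeps the friendship graph of bounded degree) — to \textsc{cr-existence for \scC}, arranged so that in every ``no'' instance even the ordinary core is empty; since the strict core is always contained in the core, one reduction then establishes hardness of both \textsc{cr-existence} and \textsc{scr-existence} at once. From a formula $\varphi$ we build a \emph{strict} and \emph{mutual} collection $\orderings$ in which every agent has at most $4$ friends; the hypothesis then supplies a game $G=\langle N,(\pref_i)_{i\in N}\rangle\in\C$ that is consistent on pairs, triangle-appreciating, monotone on triangles, $\{\toxic01\}$-toxic, weakly $\{\toxic11,\toxic22,\toxic33\}$-toxic and intolerant in triangles, and we show $G$ has a partition in the strict core iff $\varphi$ is satisfiable.

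The heart of the construction is a constant-size gadget $Q$ — a strict analogue of the pentagon behind Theorem~\ref{thm:core-with-ties} — with one distinguished \emph{port} agent and with its remaining agents arranged into triangles in a cyclic pattern. Using the listed axioms only, I would prove two facts about $Q$: \textbf{(i)} in any partition in which the port stays inside $Q$, some coalition contained in $Q$ \Core-blocks it; and \textbf{(ii)} if the port is placed in a coalition disjoint from the rest of $Q$, the other agents of $Q$ can be split into enemy-free coalitions so that no coalition meeting $Q$ even \SC-blocks. Fact (i) is where monotonicity on triangles and triangle-appreciation do the work that indifferences did in Theorem~\ref{thm:core-with-ties}: because the orderings are antisymmetric, $Q$ has strictly more candidate stable partitions than the pentagon, and these are eliminated precisely by the forced strict preferences $\{i,j,\ell\}\succ_i\{i,j\}$ and $\{i,j,k\}\succ_i\{i,j',k'\}$ on friends, while $\{\toxic01\}$-toxicity and intolerance in triangles dismiss coalitions that mix an enemy (or an extra enemy) into a small configuration and weak $\{\toxic33\}$-toxicity caps how large an enemy-free coalition inside $Q$ can be.

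Around $Q$ I would place the standard scaffolding: for each variable $x$ a gadget with two ``literal ports'' so that in any \Core-stable partition exactly one of them is free to be absorbed elsewhere (encoding the truth value), and for each clause a copy of $Q$ whose port is adjacent only to the at most three literal ports occurring in that clause, so that — via Fact (ii) — the clause gadget can be stabilized exactly when one of those literals is true. All non-friend pairs are declared mutual enemies, and the restriction of each $\ge_i$ to its at most $4$ friends is fixed as the gadgets demand; by the standing assumptions both the orderings and $G$ are computable in time polynomial in $|N|$, and the encoding contains at least $|N|$ bits. Correctness then has the expected shape. From a satisfying assignment we free one port per true literal and complete each gadget by its local solution; the resulting partition $\pi$ is enemy-free, and any potential \SC-blocking coalition $S$ is either enemy-free — hence confined to a single gadget, where the local analysis applies — or contains an enemy of some member $i$, in which case $\{\toxic01\}$-toxicity, weak $\{\toxic11,\toxic22,\toxic33\}$-toxicity and intolerance in triangles (whose conclusions are strict here, since antisymmetry of $\ge_i$ together with consistency on pairs makes $\{i,j\}\succ_i\{i\}$ and all the relevant small-coalition comparisons strict) show that some member strictly prefers $\pi(i)$ to $S$, so $S$ does not even weakly block. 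Conversely, given any \Core-stable — hence any strict-core-stable — partition, Fact (i) forces every clause gadget's port to be freed, which forces the corresponding literal true under the assignment read off from the variable gadgets; so $\varphi$ is satisfiable, and a ``no'' instance has empty core and \emph{a fortiori} empty strict core.

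The main obstacle is establishing Facts (i) and (ii): a finite but delicate case analysis over the partitions of the constant-size gadget $Q$, carried out from the axioms alone rather than from a concrete utility function. Two features make this harder than the corresponding step for Theorem~\ref{thm:core-with-ties}. First, antisymmetry of the orderings creates additional candidate stable partitions of $Q$ that must each be killed; arranging the cyclic triangle pattern of $Q$ so that the three triangle axioms jointly cover all of them is the delicate design choice. Second, the ``yes'' case now needs the stronger conclusion that no coalition \emph{weakly} blocks, so the local partitions inside the gadgets must be chosen so that, for instance, no agent is indifferent between its assigned coalition and a larger enemy-free one available to it — a property a careless choice would violate and that triangle-appreciation is exactly meant to control. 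A secondary point is the degree bookkeeping: keeping every agent, ports included, within $4$ friends constrains how a clause port attaches to its literal ports, which is presumably why the bound here is $4$ rather than the $3$ of Theorem~\ref{thm:core-with-ties}.
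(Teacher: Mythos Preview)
Your overall reduction scheme is right, and you correctly identify that a single construction can serve both \Core\ and \SC. But you have misdiagnosed \emph{where} the indifferences in Theorem~\ref{thm:core-with-ties} live and hence where the triangle axioms must do their work, and this leads your gadget design astray.

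In the proof of Theorem~\ref{thm:core-with-ties}, the clause gadget is already a strict $9$-gon ($c_i:c_{i+1}>c_{i-1}$); no ties appear there. The only indifferences are in the \emph{variable} gadget, at the agents $x_a,x_b$ who rank $x_1\sim\overline{x}_1$ (resp.\ $x_2\sim\overline{x}_2$) so that either literal can be ``absorbed'' symmetrically. The paper's proof of Theorem~\ref{thm:strict-core-no-ties} keeps the $9$-gon clause gadget \emph{unchanged} and replaces each tied pair $\{x_a,x_a',x_a''\}$ by a pair of agents $x_a,x_b$ who are mutual top friends and who, together with either $x_1$ or $\overline{x}_1$, form a triangle; triangle-appreciation and monotonicity on triangles are then exactly what make $\{x_a,x_b,x_1\}$ and $\{x_a,x_b,\overline{x}_1\}$ both good enough to pin down $x_a,x_b$ while freeing the opposite literal. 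Your proposal instead builds triangles into the clause gadget $Q$ and leaves the variable gadget as ``standard scaffolding'' --- but there is no strict, tie-free variable gadget available from Theorem~\ref{thm:core-with-ties} to fall back on, so Facts~(i) and~(ii) for your $Q$, even if you established them, would not suffice.

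Two smaller points follow from the same misreading. First, your single-port clause gadget attached to three literals would force the port to have at least two internal friends plus three external ones, breaking the degree-$4$ bound; the paper instead gives the $9$-gon three separate attachment points $c_1,c_4,c_7$, each with a single external friend. Second, the reason the bound is~$4$ rather than~$3$ is not the clause ports but the \emph{literal} agents: each $x_1$ now has friends $x_a,x_b,\overline{x}_2,c(x_1)$, one more than in Theorem~\ref{thm:core-with-ties} because the single tied neighbour $x_a$ has been replaced by the pair $x_a,x_b$.
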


\begin{remark}
The same result holds for \textsc{sis-existence for \scC}. It applies to
\textsc{sns-existence for \scC} if we add $\{\toxic11\}$-toxicity and weak $\{\toxic21\}$-toxicity. It applies to
\textsc{ssns-existence for~\scC} if we add strict $\{\toxic01,\toxic11\}$-toxicity and weak $\{\toxic21\}$-toxicity.
\end{remark}

\begin{figure*}[t]
	\centering
	\includegraphics[width=1\textwidth]{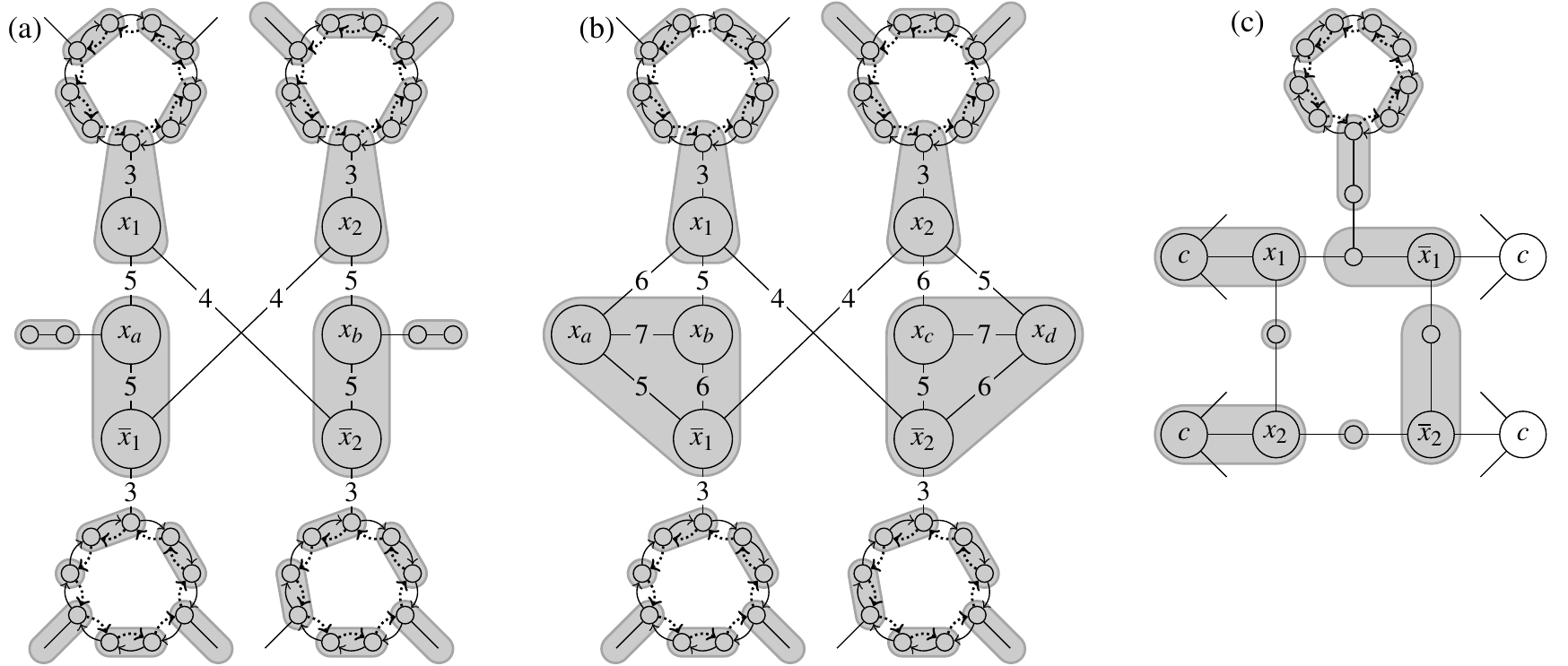}

	\caption{Graphical presentations of the \textsc{\protect\scalebox{0.85}{3}sat} reductions used. Figure (a) 
	is used in Theorem \ref{thm:core-with-ties}, (b) in Theorem \ref{thm:strict-core-no-ties}, and (c) in Theorem 
	\ref{thm:individual-stability}. 
	Agents not connected by an edge are enemies.
	The gray sets of agents indicate a stable partition $\pi$ in the hedonic game. The 
	$9$-gons form clause gadgets in (a) and (b) and a variable gadget in (c). On their own, these groups of 9 players do 
	not admit a stable outcome. So in (a) and (b), stability can only occur if for each clause one of its agents can be 
	connected to one of its (true) literals, i.e.\ if the underlying formula is satisfiable. In (c), each variable must 
	be set true or false for stability to occur.
		\vspace{-1.2em}}
	\label{fig:reduction-graphs}
\end{figure*}

\section{The Reductions}\label{sec:proofs}

The proofs of our results are by reduction from a restricted version of \threesat. The reduction behind 
Theorem~\ref{thm:core-with-ties} is inspired by an argument of Ronn~\shortcite{Ronn1990} 
showing that \textsc{stable-roommates} with ties is 
NP-complete. Theorem~\ref{thm:strict-core-no-ties} introduces triangles into this reduction to allow strict 
preferences.

We sketch the proof of Theorem \ref{thm:core-with-ties} but omit proofs of the other claims due to space 
constraints. The omitted arguments are similar to the one given, but more complicated due to \textsc{sns}-like 
stability concepts imposing little structure. 
Full proofs are given in the appendices A, B, and C. \\

\noindent \textsc{Proof of Theorem \ref{thm:core-with-ties} (sketch)}. 
We reduce from \textsc{(\scalebox{0.83}{3},b\scalebox{0.81}{2})-sat}, which is \threesat\ restricted to formulas 
in which each clause contains exactly $3$ literals, and each variable occurs exactly 
twice positively and twice negatively \cite{Berman2003}.

Given an instance formula $\varphi$ with variable set $X$ and clause set $C$, we construct the following agent set $N$:
\setlength\abovedisplayskip{3.5pt}
\setlength\belowdisplayskip{3.2pt}
\newcommand{\gci}{\operatorname{gc}\hspace{-0.6pt}\raisebox{0.4pt}{$_{\mathrm{i}}$}(\ell)}
\newcommand{\gce}{\operatorname{gc}\hspace{-0.6pt}\raisebox{0.4pt}{$_{\mathrm{e}}$}(\ell)}
\[\textstyle \bigcup_{x\in X} \{x_1, \overline x_1, x_2, \overline x_2, x_a, x_a', x_a'', x_b, x_b', x_b''\} \cup\hspace{0.3pt} \bigcup_{c\in C} \{ c_1, \scalebox{0.8}\dots, c_9 \}. \]
The four occurrences (two positive ones and two negative ones) of a variable $x\in X$ 
are called $x_1, x_2, \overline x_1, \overline x_2$. 
respectively. 
For a clause $c = \ell_1 \lor \ell_2 \lor \ell_3$, we write $c(\ell_1) := c_1$, $c(\ell_2) := c_4$, $c(\ell_3) = c_7$. Construct orderings $(\ge_i)_{i\in N}$ as follows:
\vspace{-2pt}
{\fontsize{9.5pt}{0.3em}\selectfont
\begingroup
\addtolength{\jot}{-0.2em}
\begin{alignat*}{10}
\overline x_1: &\enspace x_a > x_2 > c(\overline x_1) \enspace\enspace
& x_a: &\enspace x_1 \sim \overline x_1 > x_a' \enspace\enspace
& c_1: &\enspace \ell_1 > c_2 > c_9
\\
\overline x_2: &\enspace x_b > x_1 > c(\overline x_2) 
& x_b: &\enspace x_2 \sim \overline x_2 > x_b'
& c_4: &\enspace \ell_2 > c_5 > c_3
\\
x_1: &\enspace x_a > \overline x_2 > c(x_1)
& x_a': &\enspace x_a > x_a''
& c_7: &\enspace \ell_3 > c_8 > c_6
\\
x_2: &\enspace x_b > \overline x_1 > c(x_2)
& x_b': &\enspace x_b > x_b''
& c_i: &\enspace c_{i+1} > c_{i-1} \\
&
& x_a'': &\enspace x_a'
& x_b'': &\enspace x_b'
\end{alignat*}
\endgroup
}%

\noindent For each agent $i$ we have only listed $i$'s friends $F_i$, each friend being strictly better than $i$. 
Any agent not mentioned in $i$'s list is an enemy, i.e., an element of $E_i$. Figure~\ref{fig:reduction-graphs}(a) illustrates the orderings $(\ge_i)_{i\in N}$. 
Note that no agent has more than 3 friends, and that these orderings are mutual.

By the assumptions of Theorem \ref{thm:core-with-ties}, there is a poly-time many-one reduction that takes 
a formula~$\varphi$ as input and outputs the binary encoding of a game $G = \langle N, (\pref_i)_{i\in N} \rangle\in\C$ 
that is 
consistent on pairs, $\{\toxic01 \}$-toxic and weakly $\{\toxic11, \toxic22, \toxic33\}$-toxic with respect to the $(\ge_i)_{i\in N}$ given above. We show that $\varphi$ is satisfiable if and only if $G$ admits a \Core-stable partition.

Let $\A$ be a satisfying assignment of $\varphi$.
Take the partition
\begin{align*}
\pi &= \{ \{ \ell, c(\ell) \} : \ell \text{ a true variable occurrence} \} \\
& \enspace \cup \{ \{ x_a, \overline x_1 \}, \{ x_b, \overline x_2 \}  : x \in X \text{ set true in $\A$} \} \\
& \enspace \cup \{ \{ x_a, x_1 \}, \{ x_b, x_2 \}  : x \in X \text{ set false in $\A$} \} \\
& \enspace \cup \{ \{ x_a', x_a'' \}, \{ x_b', x_b'' \} : x \in X \} \\
& \enspace \cup \{ \{ c_i, c_{i+1} \}, \dots, \{ c_j \} : c \in C \}.
\end{align*}
In the last line we partition clause players that are not matched to true variables into pairs and singletons in some stable way as in Figure~1(a), see full proof for details.

We show that $\pi$ is \Core-stable in $G$. Since $\pi$ is \textsc{ir}, no singleton blocks. By consistency on pairs, it can be checked that no coalition of size 2 blocks. Now let $S$ be a coalition with $|S|\ge 3$. Consider the friendship graph on $N$ with friends connected by an edge. This graph has girth 6 and does not contain a cubic subgraph. If $S$ contained an isolated agent or a leaf (a member with at most 1 friend in $S$), then $S$ is not blocking by $\{\toxic01\}$-toxicity and weak $\{\toxic11\}$-toxicity. So $S$ contains a cycle and thus $|S| \ge 6$. Since $S$ is not cubic, there is an agent, matched in $\pi$, who has 2 friends in $S$ and so by weak $\{\toxic22\}$-toxicity is worse off in $S$, so $S$ does not block. Hence there are no blocking coalitions and $\pi$ is in the core.

Let $\pi$ be a \Core-stable partition of $G$. We sketch an argument giving a satisfying assignment of $\varphi$. Because the players 
$\{c_1,\dots,c_9\}$ of a clause are unstable on their own (toxicity limits coalitions within them to size $2$, no 
agent wants to be clockwise last in a coalition, and the number of members is odd), stability of $\pi$ 
implies that for each clause one of its players must be in a coalition with the literal connected to it. Define a propositional assignment 
$\A$ so that all literals in a coalition with a $c$-player are set true, and set other variables arbitrarily. This 
assignment is well-defined. Indeed, suppose variable $x$ is to be set both true and false. Then \textsc{wlog} 
either both $x_1$ and $\overline x_1$ or both $x_1$ and $\overline x_2$ are matched with their $c_1$. Either $\{ 
x_a, x_1 \}$ or $\{x_1, \overline x_2\}$ will then end up blocking, a contradiction. Clearly, $\A$ satisfies 
$\varphi$. \qed 

\section{Applications}\label{sec:apps}
\label{sec:applications}
Our NP-hardness results have implications
for many well-known classes of hedonic games.
In this section we 
briefly describe
some of these (see \cite{azizsavani} for details)
and check which of our conditions they satisfy.
In this way, we recover---and sometimes strengthen---a number of known hardness results
for these games. We also introduce 
five
new classes of games, and 
show how our framework allows us to deduce hardness results for them with ease. See the appendix for more details.

\vspace{2pt}

\class{Individually Rational Coalition Lists (IRCL)} 
Balles\-ter~\shortcite{Ballester2004} proposes to represent a hedonic game by 
listing the agent preferences $\pref_i$ explicitly from best to worst, 
but cutting the list off after the entry $\{i\}$. This 
representation is complete, but not always succinct. Ballester proves that 
for $\alpha\in\{$\Core,\NS,{\IS}$\}$
deciding $\alpha$-{\sc existence} 
is NP-complete under this representation. 
We deduce these results by considering IRCLs that list the pairs $\{ i,j \}$ for $j\in F_i$. Since 
Theorems~\ref{thm:core-with-ties} and~\ref{thm:individual-stability} apply even if each agent has only $3$ friends, we 
therefore have a hardness result for $\alpha\in\{$\SNS,\SIS,\Core,\NS,{\IS}$\}$ even if the list of each agent 
includes at most $3$ entries, each of which is a pair. A similar result is shown by 
Deineko and Woeginger~\shortcite{deineko2013two}. They 
prove that \Core-\textsc{existence for ircl} is hard even for lists of length $2$, with entries being coalitions of 
size $3$. Theorem~\ref{thm:strict-core-no-ties} applies if we allow lists up to length $9$, which can encode a 
triangle-appreciating game where agents have up to 4 friends.

\class{Hedonic Coalition Nets} 
Elkind and Wooldridge~\shortcite{Elkind2009} study a rule-based representation for hedonic games in which 
agents' preferences are described by weighted boolean formulas. It can be shown that polynomial size nets are 
sufficient to describe, for any collection of orderings $(\ge_i)_{i\in N}$, 
a game satisfying all our conditions, implying hardness of $\alpha$-\textsc{existence}
for all $\alpha$ considered in this work. This is perhaps not surprising:
while Elkind and Wooldridge only establish the hardness of \Core-\textsc{existence}
in their work, they show that one can compile an IRCL representation into a hedonic coalition net
representation with at most 
polynomial overhead. Because our hardness 
results hold even if each player is only allowed 3 or 4 
friends, we can say in addition that $\alpha$-\textsc{existence} for hedonic coalition nets remains hard even if we restrict 
each player's preferences to be described by at most 4 or 5 formulas, and even if the weights of these formulas are given in unary.

\class{Stable Roommates}
The reduction behind Theorem~\ref{thm:core-with-ties} is a modified version of 
Ronn's construction showing that \textsc{cr-existence for srt}, the stable roommate problem with ties, is NP-complete \cite{Ronn1990}.
It is thus no surprise that the class of stable 
roommate problems, considered as hedonic games in which sets with $3$ or more members are unacceptable, fulfills the 
conditions of Theorem~\ref{thm:core-with-ties} (but note that this formulation corresponds to \textsc{srti}, not \textsc{srt}). Indeed, \Core-\textsc{existence for srti} remains hard even if the preference list of each agent has length at most 3, and by Theorem~\ref{thm:individual-stability} this is also true of \textsc{ns-} and \textsc{is-existence}. Now, consider a generalization of \textsc{stable-roommates} where 
rooms have capacity $1$, $2$, or $3$, and rooms with capacity $3$ are generally preferred because they are cheaper per 
person. Then it can be checked that the conditions of 
Theorems~\ref{thm:individual-stability} and~\ref{thm:strict-core-no-ties} are satisfied, giving hardness of $\alpha$-{\sc existence}
for all $\alpha$ for this model.

\class{Stable Marriages}
A version of Theorem~\ref{thm:individual-stability} implies that \textsc{ns-existence for smi}, the stable marriage 
problem with incomplete lists, is NP-complete. This extends the NP-completeness result for \textsc{smti} obtained 
by \cite{Aziz2013b}. Aziz also notes that it is possible to embed \textsc{smti} into other 
classes~$\C$ of hedonic games, and thus to deduce hardness of \textsc{ns-existence for \scC};
this observation provides an (alternative) method of deriving hardness results 
for several classes of hedonic games. 

\class{$\mathcal{W}$-preferences} 
Cechl\'{a}rov\'{a} and Hajdukov\'{a}~\shortcite{Cechlarova2004} 
consider hedonic games where each agent first ranks all other agents
and then compares coalitions
based on their worst member under this ranking. 
Clearly, the game so obtained is consistent on pairs and strictly $\{\toxic{k}{1} \}$-toxic for all $k$. It follows that (with ties allowed) \Core-\textsc{existence} is NP-hard by Theorem~\ref{thm:core-with-ties} (a result first obtained by \cite{Cechlarova2004}), and that 
\NS- and \IS-\textsc{existence} are NP-complete by Theorem~\ref{thm:individual-stability}, 
first shown by Aziz {\em et al.}~\shortcite{Aziz2012a}. 
\NS- and \IS-\textsc{existence} are hard even if preferences are strict; 
the latter result was previously unknown.

\class{$\mathcal{W}\mathcal{B}$-preferences} 
Noting that agents in $\mathcal{W}$-hedonic games are extremely pessimistic, 
Cechl\'{a}rov\'{a} and Hajdukov\'{a}~\shortcite{Cechlarova2004a} 
propose a compromise: Agents still rank coalitions according to their worst member, but break ties in favor of the 
coalition with better best member. Again the game obtained is consistent on pairs and strictly 
$\{\toxic{k}{1} \}$-toxic for all $k$, so \Core-, \NS-, and \IS-{\sc existence} are hard. 

\class{W- and B-hedonic games}
In these two classes of games \cite{Aziz2012a,Aziz2013}, agents rank coalitions according to their worst or best member, but coalitions containing an enemy are not individually rational. As for $\mathcal W$-preferences, we see that \Core-, \NS-, and \IS-{\sc existence} are hard.
\\

\noindent In all of the following classes of games,
agents first assign cardinal utilities $v_i(j)\in\mathbb R$ to all agents in $N=\{1,\dots,n\}$, and then
lift these utilities to coalitions (e.g., by computing the sum or average
of the utilities of coalition members). 

The following method
of constructing integer-valued functions $v_i: N\to\mathbb R$ from orderings $(\ge_i)_{i\in N}$
will be used repeatedly: given $x,y\in {\mathbb Z}$, we set
$v_i(i)=0$, $v_i(j)=x$ for $j\in E_i$ and let $y\le v_i(j)\le y+n$
for $j\in F_i$ so that for each $j,k\in F_i$ we have $v_i(j)\ge v_i(k)$ iff $j\ge_i k$ 
(this is accomplished by assigning utility $y+k+1$ to friends
at the $k$-th `preference level'). 
We refer to such utilities
as $\llbracket \hspace{0.2pt}x,y\hspace{0.2pt}\rrbracket$-utilities. 

\class{Additively Separable Games (ASGs)} 
In these games, preferences are given by
$S\pref_i T$ iff $\sum_{j\in S} v_i(j) \ge \sum_{j\in T} v_i(j)$. 
This class of games 
satisfies all our theorems, so $\alpha$-\textsc{existence} 
is hard for all $\alpha$ we consider. 
Indeed, given $N = \{1,\dots,n\}$ and $(\ge_i)_{i\in N}$, we 
consider the ASG with $\llbracket-(n$\raisebox{-1pt}{$^2$}$+2n),4\hspace{0.2pt}\rrbracket$-utilities.
Then a coalition 
containing an enemy of $i$ is not individually rational for~$i$, so this game is 
strictly $\{\toxic{k}{1} \}$-toxic for all $k$, 
and it is obviously consistent on pairs, triangle-appreciating and monotone on triangles.
$\alpha$-\textsc{existence} remains hard even if players are allowed 
at most 3 or $4$ friends (depending on $\alpha$), so for ASGs, it remains hard even if $v_i(j)$ is positive for at most $3$ or $4$ 
agents~$j$. This improves on the reduction in \cite{Sung2010}, where agents have up to $11$ friends.

\class{Fractional Hedonic Games (FHGs)}
This class of games was recently proposed by Aziz {\em et al.}~\shortcite{Aziz2014a}. 
Preferences are given by $S\pref_i T$ iff 
$1/|S| \sum_{j\in S} v_i(j) \ge 1/|T| \sum_{j\in T} v_i(j)$. 
Brandl {\em et al.}~\shortcite{Brandl2015} 
have shown hardness of \mbox{$\Core$-,} $\NS$-, and $\IS$-\textsc{existence}. 
We recover these results and complement them by   
showing hardness of $\SSNS$-, $\SNS$-, $\SIS$- and $\SC\textsc{-existence}$; 
all these results hold even if the underlying preferences are strict. 
FHGs with $\llbracket-(n$\raisebox{-1pt}{$^2$}$+5n), 5\hspace{0.3pt}\rrbracket$-utilities
satisfy all of our properties; choosing $y=5$
ensures triangle-appreciation.

\class{Social FHGs}
An FHG is \textit{social} if agents' utilities for each other are non-negative.
Theorem~\ref{thm:strict-core-no-ties} applies to the class of social FHGs. 
Indeed, given $(\ge_i)_{i\in N}$
we can construct a social FHG with $\llbracket0,7n\rrbracket$-utilities.
Toxicity follows from $v_i(j)\ge 7n$ for $j\in F_i$, and other properties
can be checked as for FHGs. 
To ensure that our framework applies to social FHGs, we carefully crafted our constructions 
to only require weak toxicity whenever possible.

\smallskip

\noindent
The next 
five
classes of hedonic games are based on fairly intuitive ways
of deriving utilities for coalitions from utilities for individual players;
however, to the best of our knowledge we are the first to consider the computational
complexity of stability-related probems for these games (median games have been suggested
by \cite{Hajdukova2006} as an interesting topic; the other four classes appear
to be entirely new). 

\class{Median Games}
Agents evaluate coalitions according to their 
median value, which in odd-size coalitions is the middle element, and in even-size coalitions is the mean of the 
middle two elements. Median games with $\llbracket0,5\rrbracket$-utilities satisfy 
Theorem~\ref{thm:strict-core-no-ties}. Notice that
in this construction $v_i(j)$ are non-negative, so hardness holds even for `social median games' with non-negative underlying utilities. There are various other ways of defining median games. In particular, we can use a purely ordinal version by taking the worse of the middle two players in even-sized coalitions, satisfying Theorem~\ref{thm:core-with-ties}; if agents take either the ordinal or cardinal median of the coalition $S\setminus\{i\}$ then both Theorems~\ref{thm:core-with-ties} and~\ref{thm:individual-stability} apply.

\class{Geometric Mean Games}
In these games agents evaluate coalitions according 
to the geometric mean \scalebox{0.9}{$\sqrt[|S|]{\prod v_i(j)}$} of member 
utilities. We obtain the same hardness results as for FHGs by taking logs.

\class{Nash Product Games}
This is the class of games that are `multiplicatively separable'; agents evaluate coalitions according to $\prod_{j\in S} v_i(j)$. As 
far as hardness is concerned these games behave identically to additively separable games, again by taking logs.

\class{Midrange ($\frac12 \mathcal B + \frac12\mathcal W$)}
In this case, agents evaluate a coalition by averaging the maximum and minimum utility in it. 
With $\llbracket-3n,1 \rrbracket$-utilities, these games are strictly $\{\toxic{k}{1} \}$-toxic for all $k$ and consistent on pairs, so Theorems~\ref{thm:core-with-ties} and~\ref{thm:individual-stability} apply. 

\class{$r$-Approval}
Starting with cardinal utilities, sum the (up to) $r$ highest elements of a coalition. If $r\ge 3$, then games with $\llbracket -6r n,4\rrbracket$-utilities satisfy the conditions of Theorems ~\ref{thm:core-with-ties} and~\ref{thm:individual-stability}. If $r\ge 4$, they satisfy the conditions of Theorem~\ref{thm:strict-core-no-ties}.

\section{Conclusions}\label{sec:conclusions}
We have developed a framework that enables us to prove NP-hardness of $\alpha$-{\sc existence for \scC}
for many choices of $\alpha$ and~$\C$. Our results show that problems in this family tend to be hard
even for representation formalisms with very limited expressivity, and, moreover, are unlikely
to admit an efficient parametrized algorithm for many natural choices of parameter
(such as length and coalition size in the IRCL representation or number of formulas per agent in the hedonic coalition nets
representation). However, they also indicate which features of hedonic games may lead to tractability
of stability-related problems. In particular, restricting the number of different 
`preference intensities' (e.g., the range of $v_i(j)$ in ASGs, FHGs, and median games) 
rules out consistency on pairs, so one may hope for easiness results when this number is small.

While we focused on the problem of checking whether a stable partition exists,
another important stability-related problem is checking whether a specific 
partition is stable. This problem is in P for \IS\ and \NS\ for all
classes of hedonic games considered here, simply because the number of possible
deviations is polynomially bounded; however, for notions of stability that are 
based on group deviations it is often coNP-complete. It would be interesting to extend our framework
to handle this problem as well.

Since verifying stability is often hard, 
$\alpha$-{\sc existence for~\scC} is usually not known to be in NP for stability notions based on group deviations. Thus most of 
our hardness results do not have a tight
complexity upper bound.       
For all representation formalisms we consider, these problems are in $\Sigma_2^p$,
and $\Core$-{\sc existence for ASGs} is known to be complete for this complexity class \cite{Woeginger2013}.
A natural open question is whether our framework 
can be extended from NP-hardness proofs to $\Sigma_2^p$-hardness proofs.

\subsubsection*{Acknowledgements}
We thank the anonymous reviewers from IJCAI and CoopMAS for their helpful comments and pointers to the literature.

\bibliographystyle{named}
\bibliography{simple-causes-bibliography}

\newpage
\appendix

\section{Proof of Theorem 1}
\begin{theorem-non}
	\textsc{sis-} and \textsc{cr-existence} $\langle$\textsc{sns-existence}$\rangle$ \textsc{for \scC}  is NP-hard if for each $N$
	and every collection of 
	orderings $(\ge_i)_{i\in N}$ 
	there is a game $\langle N, (\pref_i)_{i\in N} \rangle\in\C$ 
	that is consistent on pairs, $\{\toxic01\}$-toxic $\langle\{\toxic01,\toxic11\}$-toxic$\rangle$ and weakly $\{\toxic11, \toxic22\}$-toxic with respect to $(\ge_i)_{i\in N}$.
\end{theorem-non}

We will prove both statements in the theorem together, with the proof for \textsc{sns-existence} added in $\langle\rangle$-brackets. Proof by reduction from (3,B2)-SAT (each clause contains exactly 3 literals, each variable occurs exactly twice positively and twice negatively).

Given a formula $\varphi$ we denote
\begin{enumerate}
	\item[(a)] for any variable $x$ its four occurrences by $x_1$, $x_2$, $\overline x_1$, $\overline x_2$,
	\item[(b)] for any variable occurence $\ell$, $c(\ell)$ is the clause that $\ell$ occurs in.
\end{enumerate}

Given an instance $\varphi$ of (3,B2)-SAT, take the following set $N$ of agents, with 9 agents per clause and 10 agents per variable.
\begin{align*}
	N = &\{ c_1,\dots,c_9 \mid c \in \text{Clauses}(\varphi) \} \\
	& \cup \{ x_1, x_2, \overline x_1, \overline x_2, x_a, x_a', x_a'', x_b, x_b', x_b'' \mid x \in\text{Var}(\varphi) \}.
\end{align*}

For a clause $c = \ell_1\lor\ell_2\lor\ell_3$, where $\ell_i$ is a variable occurrence, we will connect $c_1$ with $\ell_1$, $c_4$ with $\ell_2$, and $c_7$ with $\ell_3$. For $c_1$, $c_4$, $c_7$, we call the variable occurrence connected with them `its literal'. For a variable occurrence $\ell$, we will call the $c_i$ player connected with it `its clause player' and denote it by $c(\ell)$. We generate the following orderings, which only show the friends of each player. The last entry of player $i$'s list is strictly better than $i$ who is strictly better than all players not mentioned.
\begin{align*}
c_1 &: \ell_1 > c_2 > c_9 \\
c_4 &: \ell_2 > c_5 > c_3 \\
c_7 &: \ell_3 > c_8 > c_6 \\
c_i &: c_{i+1} > c_{i-1} \qquad\text{{\small for $i \neq 1,4,7$ with subscripts mod $9$}} \\
x_1 &: x_a > \overline x_2 > c(x_1) \\
\overline x_1 &: x_a > x_2 > c(\overline x_1) \\
x_2 &: x_b > \overline x_1 > c(x_2) \\
\overline x_2 &: x_b > x_1 > c(\overline x_2) \\	  
x_a &: x_1 \sim \overline x_1 > x_a' \\
x_a' &: x_a > x_a'' \\
x_a'' &: x_a' \\
x_b &: x_2 \sim \overline x_2 > x_b' \\
x_b' &: x_b > x_b'' \\
x_b'' &: x_b'
\end{align*}

\begin{figure}[ht]
	\scalebox{0.8}{\includegraphics{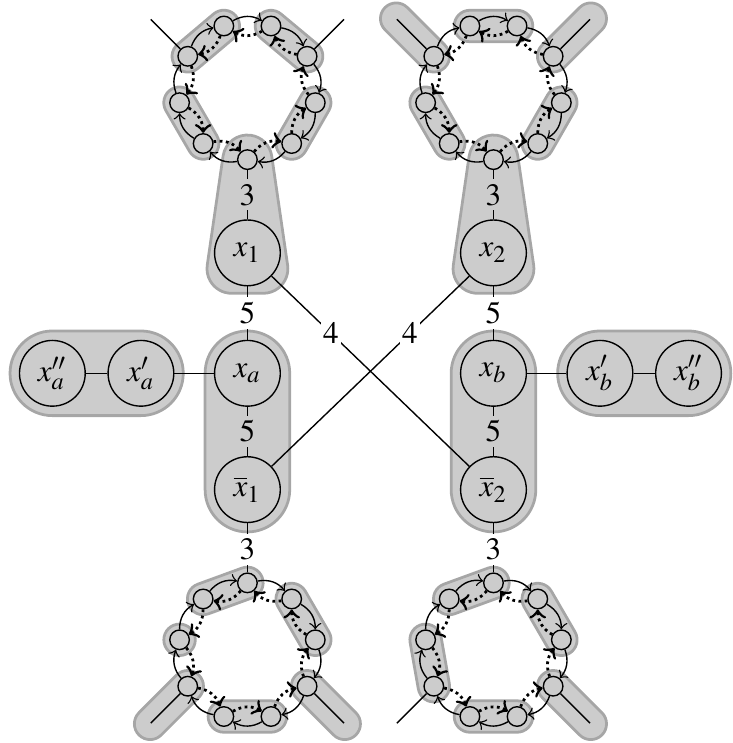}}
\end{figure}

\noindent Notice the following facts:
\begin{itemize}
	\item No player has more than 3 friends.
	\item The orderings are mutual: $j$ is $i$'s friend if and only if $i$ is $j$'s friend.
	\item The friendship graph has girth 6. In particular, it is triangle-free.
\end{itemize}
By the assumptions of the theorem, we can in polynomial time find a hedonic game which is consistent on pairs, \toxicc01-toxic $\langle\{ \toxic01,\toxic11 \}$-toxic$\rangle$, and weakly $\{\toxic11,\toxic22\}$-toxic with respect to the orderings given. We show that $\varphi$ is satisfiable if and only if this game admits a core- and SIS-stable $\langle$SNS-stable$\rangle$  partition.

\subsubsection{Satisfiable $\Rightarrow$ Stable.}

Suppose $\mathcal A$ is a satisfying assignment for $\varphi$. Construct the following partition $\pi$ of $N$, which consists only of pairs and singletons:
\begin{itemize}
	\item Match true variable occurrences with their clause player.
	\item If $x$ is false then $\{ x_1, x_a \}, \{ x_2, x_b \}\in \pi$.\\
	If $x$ is true then $\{ \overline x_1, x_a \}, \{\overline x_2, x_b \} \in \pi$.
	\item $\{ x_a', x_a'' \} \in \pi$ and $\{ x_b', x_b'' \}\in\pi$.
	\item For each clause, match the non-matched $c$-players in some stable way:
	\begin{itemize}
		\item If exactly 1 player is matched with a variable occurrence, say $c_1$ is matched, then we take $\{c_2, c_3\}, \{c_4,c_5\}, \{c_6,c_7\},\{c_8,c_9\}\in\pi$.
		\item If exactly 2 players are matched with a variable occurrence, say $c_1$ and $c_4$ are matched, then $\{c_2, c_3\}, \{c_5\}, \{c_6,c_7\},\{c_8,c_9\}\in\pi$.
		\item If all 3 players are matched with a variable occurrence, then $\{c_2, c_3\}, \{c_5,c_6\}, \{c_8,c_9\}\in\pi$.
	\end{itemize}
	Each of these three cases is illustrated in the drawing above.
\end{itemize}

Suppose that $\pi \overset H \longto \pi'$ for some $H\inn N$ with $\pi'\succ_i\pi$ for all $i\in H$. We will prove that this is not an SIS-deviation $\langle$not an SNS-deviation$\rangle$. Note that if there are no SIS-deviations, then $\pi$ is also core-stable.

Some terminology and observations:
\begin{itemize}
	\item an agent is \textit{matched} if it is in a pair in $\pi$, and \textit{lonely} if it is in a singleton in $\pi$ (these terms always refer to $\pi$ and never to $\pi'$).
	\item an agent is a \textit{deviator} if it is in $H$, and a \textit{non-deviator} otherwise.
	\item no 2 enemies are together in $\pi$.
	\item any 2 lonely players have distance at least 5, which in the following will mean that at most 1 player in a coalition considered below can be lonely. 
\end{itemize}

By definition of $\pi \overset H \longto \pi'$, no two non-deviators can be in the same coalition in $\pi'$ if they weren't together in $\pi$ already. This fact will be used often and not particularly stressed in the following argument.

\begin{lemma}
	\label{lemma:2-friends}
	No matched deviator $i$ has exactly $2$ friends in $\pi'(i)$.
\end{lemma}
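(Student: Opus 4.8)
The plan is to argue by contradiction: suppose $i$ is matched in $\pi$, lies in $H$, and has exactly two friends in $\pi'(i)$. Since $i$'s preference is consistent on pairs and $i$ has at most $3$ friends with a short friend-list (each friend list has the form $a > b > c$, $a > b$, or $a \sim b$, etc.), the relevant case analysis is over which agent $i$ is. The key point is the toxicity assumptions: $\pi'$ can only improve on $\pi$ for $i$ if the extra friends in $\pi'(i)$ compensate, but weak $\{\toxic22\}$-toxicity says that if $|\pi'(i)\cap F_i| = 2$ and $|\pi'(i)\cap E_i|\ge 2$, then $\{i,j\}\succ_i \pi'(i)$ for every friend $j$, so $\pi'(i)$ cannot be strictly better than the pair $\pi(i)$ that $i$ already enjoys, contradicting $\pi'\succ_i\pi$. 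Hence $\pi'(i)$ can contain at most one enemy of $i$. The remaining work is to rule out $\pi'(i)$ consisting of $i$, two of $i$'s friends, and at most one enemy.

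Next I would use the structure of the friendship graph. Because the graph has girth $6$ and is triangle-free, the two friends of $i$ inside $\pi'(i)$ are not friends of each other, so in $\pi'(i)$ they each see (at most) $i$ as a friend plus possibly further members. Now I examine the two friends, say $j$ and $k$, from their own perspective. Each of $j,k$ is itself either matched or lonely in $\pi$; since lonely players are pairwise at distance $\ge 5$ and $\{i,j,k\}$ spans distance $\le 2$, at most one of $i,j,k$ is lonely, and $i$ is matched by hypothesis, so at least one of $j,k$ is matched. For a matched friend, say $j$: in $\pi'$ its coalition is $\pi'(i) = \pi'(j)$, in which $j$ has at most one friend (namely $i$, since $j\not\sim k$ and any other member is a non-friend of $j$ up to girth constraints) together with one or more enemies. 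If $j$ has $0$ friends and $\ge 1$ enemy in $\pi'(j)$, $\{\toxic01\}$-toxicity gives $\{j\}\pref_j\pi'(j)$; if $j$ has exactly $1$ friend and $\ge 1$ enemy, weak $\{\toxic11\}$-toxicity gives $\{j, j'\}\succ_j\pi'(j)$ for some friend $j'$ — either way $\pi'(j)$ is no better than the pair (or singleton) $j$ has in $\pi$. The point is to combine this with $\pi'\succ_i\pi$ strict for all of $H$: a matched non-deviator in $\pi'(i)$ immediately forces $\pi'(i)$ to be the same pair it was in $\pi$ (non-deviators cannot change coalition-mates), which is incompatible with $|\pi'(i)| \ge 3$; and a deviator among $j,k$ must strictly improve, which the toxicity bounds above forbid unless $\pi'(j)$ is a pair of friends — but then $\pi'(j)=\pi'(i)$ would have size $2$, not $\ge 3$.

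So the skeleton is: (i) rule out $\ge 2$ enemies in $\pi'(i)$ via weak $\{\toxic22\}$-toxicity against the pair $i$ has in $\pi$; (ii) with at most $1$ enemy and exactly $2$ friends, look at one of those friends $j$ that is matched in $\pi$ (one exists by the distance argument); (iii) bound $j$'s friend-count inside $\pi'(j)=\pi'(i)$ to $\le 1$ using triangle-freeness; (iv) apply $\{\toxic01\}$- or weak $\{\toxic11\}$-toxicity to show $\pi'(j)$ is at best as good for $j$ as $j$'s pair/singleton in $\pi$ — contradicting strict improvement if $j\in H$, and contradicting the reachability constraint (non-deviators keep their $\pi$-partners) if $j\notin H$. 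I expect step (iii)–(iv) — the careful bookkeeping of how many friends each of $i$'s two chosen friends can have inside the same coalition, across all the different "types" of agents in the construction ($c_i$, $x_1$, $x_a$, $x_a'$, etc.) — to be the main obstacle, since it requires checking that the girth-$6$ and no-cubic-subgraph properties genuinely prevent any member of $\pi'(i)$ from accumulating two friends there; the $\langle$SNS$\rangle$ variant additionally needs the strengthened $\{\toxic11\}$-toxicity at exactly this point, which is why the bracketed hypotheses are present.
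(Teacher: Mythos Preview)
Your skeleton is close to the paper's, and step (i) is exactly right. But two steps do not go through as written.

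\textbf{The non-deviator branch of (iv) uses the wrong tool.} You write that a matched non-deviator $j\in\pi'(i)$ ``immediately forces $\pi'(i)$ to be the same pair it was in $\pi$ (non-deviators cannot change coalition-mates)''. This is false: reachability only says that two \emph{non}-deviators stay together iff they were together before; it does not stop deviators (such as $i$) from joining a non-deviator's coalition. So $\pi'(j)$ can be strictly larger than $\pi(j)$ even when $j\notin H$. The correct lever here is the \textsc{sis} consent requirement: since $j\in\pi'(i)$ and $i\in H$, we need $\pi'\pref_j\pi$; showing instead that $j$ is strictly worse off (via weak $\{\toxic11\}$-toxicity, because $j$ was matched) kills the deviation. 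The paper uses exactly this consent argument, not a reachability argument.

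\textbf{Step (iii) fails in the $|S|=4$ case.} Write $S=\{i,j,k,\ell\}$ with $j,k$ the two friends of $i$ and $\ell$ the single enemy. Triangle-freeness gives $j\not\sim k$, but nothing stops $\ell$ from being a friend of one of $j,k$ --- girth $6$ only forbids $\ell$ being friends with \emph{both}. So a matched friend of $i$ you select in (ii) may well have two friends and one enemy in $S$, and then neither $\{\toxic01\}$- nor weak $\{\toxic11\}$-toxicity applies to it. The paper avoids this by arguing differently: among the pair of mutual enemies $j,\ell$ (the ``diagonal''), one must be a deviator; that deviator, having one friend and at least one enemy in $S$, must be lonely (else weak $\{\toxic11\}$-toxicity contradicts strict improvement); then the other diagonal agent is matched and is made strictly worse off, violating \textsc{sis} consent. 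You will need this kind of case split on who deviates, not just on who is matched, to close the $|S|=4$ case.
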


\noindent
\textit{Proof.} Suppose $i$ is a matched deviator and ends up in a coalition $S\in\pi'$ which includes exactly 2 friends of $i$. Since $i$ prefers $\pi'$ to $\pi$, by weak \toxicc22-toxicity, $S$ includes at most 1 enemy of $i$, so $|S|$ is either 3 or 4.

	\[ \xymatrix{ j \ar@{-}[r] & i \ar@{-}[r] & k  } \]

If $|S|$ is 3, say $S = \{ i,j,k \}$, then since the game is triangle-free and friendship is mutual, each of $j$ and $k$ have 1 friend (namely $i$) and 1 enemy in $S$. Since $j$ and $k$ are enemies, they are not together in $\pi$ and hence they cannot be both non-deviators (and still end up in the same coalition in $\pi'$). Say $j$ is a deviator. Then $j$ cannot be matched by weak \toxicc11-toxicity ($j$ has an enemy in $S$). So $j$ is lonely and hence $k$ is not lonely since lonely players are far apart. Now $k$ is made worse off by the deviation (since $k$ is not lonely and by weak \toxic11-toxicity), so the deviation is not SIS. $\langle$With \toxicc11-toxicity, $j$ is not made happier by the deviation, so not an SNS deviation.$\rangle$

	\[ \xymatrix@R-1pc{ j \ar@{-}[r] & i \ar@{-}[r] & k \\ &&\ell \ar@{-}[u]  } \]

Suppose $|S|$ is 4, say $S = \{ i, j, k, \ell \}$ where $\ell$ is an enemy of $i$. Suppose first that $\ell$ had no friends in $S$. Then $\ell$ is not better off under $\pi'$ by \toxicc01-toxicity and thus is a non-deviator. Since $j$ and $k$ were not previously together with $\ell$, they must both be deviators. If either of them was matched, they'd now be unhappy by weak \toxicc11-toxicity. Since they are deviators, they cannot be unhappy, and hence both were lonely. But then $i$ is friends with 2 lonely players, contradiction. Hence $\ell$ has a friend in $S$. Since there are no 4-cycles in the game, $\ell$ cannot be friends with both $j$ and $k$, and thus must be friends with exactly 1 of $j$ and $k$, say $k$. So $j$ and $\ell$ are enemies. Since they are not together under $\pi$, at least 1 of them must be a deviator, say $j$. Since there are enemies in $S$, $j$ must have been lonely, and hence $\ell$ is not lonely. Now $\ell$ is made worse off by the deviation (since $\ell$ is not lonely), so the deviation is not SIS. $\langle$With \toxicc11-toxicity, $j$ is not made happier by the deviation, so not SNS.$\rangle$ \qed

\begin{lemma}
	No matched deviator $i$ has exactly $3$ friends in $\pi'(i)$.
\end{lemma}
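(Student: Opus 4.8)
The plan is to assume, for contradiction, that $i$ is a matched deviator with exactly three friends $j_1,j_2,j_3$ in $S:=\pi'(i)$, and to push structure through the friendship graph until it collapses. First some setup. Since the graph is triangle-free and the orderings are mutual, $j_1,j_2,j_3$ are pairwise non-adjacent, hence pairwise enemies, hence never together in $\pi$; so at most one of them is a non-deviator and at least two are deviators. Next I would isolate the workhorse \textbf{saturation fact}: every matched deviator $f$ lying in $S$ has all three of its friends in $S$ (so $f$ is a degree-$3$ vertex). Indeed $f$ strictly prefers $S$ to $\pi(f)$, and, since no two enemies sit together in $\pi$, $\pi(f)=\{f,f'\}$ with $f'\in F_f$ and $f'>_f f$; if $f$ had $0$ friends in $S$ then $\{i,j_1,j_2,j_3\}\subseteq S$ forces $|S\cap E_f|\ge 3$, so $\{\toxic01\}$-toxicity plus consistency on pairs give $\pi(f)=\{f,f'\}\succ_f\{f\}\pref_f S$, contradicting $S\succ_f\pi(f)$; if $f$ had exactly $1$ friend in $S$ then (as $|S|\ge 4$) $|S\cap E_f|\ge 2$, and weak $\{\toxic11\}$-toxicity gives $\pi(f)\succ_f S$; and exactly $2$ friends is excluded by Lemma~\ref{lemma:2-friends}. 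I would also record that all non-deviators lying in $S$ occupy one common $\pi$-coalition, hence there are at most two of them and they form a $\pi$-pair or singleton; and that in the construction the centres $x_a,x_b$, the occurrences $x_1,\bar x_1,x_2,\bar x_2$, and the pendant vertices $x_a',x_a'',x_b',x_b''$ are \emph{always} matched in $\pi$, so a pendant forced into $S$ cannot be a matched deviator (wrong degree) and is therefore a non-deviator of $S$.

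Now I would run the propagation. Since at most one $j_\bullet$ is lonely (lonely players are degree-$2$ $c$-vertices that are pairwise far apart), some $j_\bullet$ is a matched deviator, hence saturated; and more generally the saturation fact makes ``all friends in $S$'' spread to any neighbour that is a matched deviator. Every always-matched degree-$3$ vertex dragged into $S$ is thus either a further saturated deviator (continue) or a non-deviator (it lands in the $\le 2$-element $\pi$-pair of non-deviators). Chasing this along the variable gadget's $6$-cycle $x_1\!-\!x_a\!-\!\bar x_1\!-\!x_2\!-\!x_b\!-\!\bar x_2$, the pendants $x_a'\!-\!x_a''$, $x_b'\!-\!x_b''$, the clause $9$-cycles and the clause--occurrence edges --- splitting into the cases $i\in\{x_a,x_b\}$, $i$ an occurrence, $i$ a junction $c_1/c_4/c_7$ --- one reaches in every case one of two outcomes: \textbf{(a)} two pendants from different pairs (e.g.\ $x_a'$ and $x_b'$) are forced into $S$ as non-deviators, contradicting that the non-deviators of $S$ form one $\pi$-pair; or \textbf{(b)} the spreading stalls at an always-matched vertex $v$ (an occurrence, an $x_a/x_b$, or a junction $c_1$) that is forced to be a non-deviator and has at most $2$ friends but $\ge 2$ enemies inside $S$ --- the enemies being the neighbours forced in alongside it, e.g.\ $c_2,c_9$ when $v=c_1$, or the $c$-side of an occurrence --- so weak $\{\toxic11\}$- or $\{\toxic22\}$-toxicity yields $\pi(v)\succ_v S=\pi'(v)$, i.e.\ the joined non-deviator $v$ is strictly worse off and $(H,\pi')$ is not an SIS-deviation. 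Either outcome finishes the lemma. The $\langle$SNS$\rangle$ version runs along identical lines, using strict $\{\toxic01,\toxic11\}$-toxicity where the weak versions were used above.

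The \textbf{main obstacle} is outcome (b): the bookkeeping of friends versus enemies inside $S$ for each of the three degree-$3$ vertex types. This case analysis is forced on us precisely because no $\{\toxic33\}$-type hypothesis is available, so we can never conclude directly that the saturated vertex $i$ itself (or any saturated vertex that also has enemies in $S$) is unhappy; at every turn we must chase the spreading to a neighbouring \emph{always-matched} vertex and argue there, tracking exactly which gadget edges have been pulled into $S$. The facts $|S|\ge 4$, girth $6$, and maximum degree $3$ are exactly what keep the enemy counts at $\ge 2$ and stop the spreading from short-circuiting before it reaches a pendant (outcome (a)) or a stalled vertex (outcome (b)).
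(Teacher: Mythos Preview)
Your core observations are exactly the ones the paper uses: every matched deviator in $S$ must have all three of its friends in $S$ (hence is a degree-$3$ vertex), and the non-deviators in $S$ all come from a single $\pi$-pair. But the paper packages these into a clean global counting argument rather than a case-by-case propagation. It calls a matched member of $S$ with at most two friends in $S$ \emph{sad}; by the saturation fact sad players are non-deviators, so there are at most two of them and they share a $\pi$-pair. It then observes that any $x_a$, $x_b$, or junction $c_1/c_4/c_7$ in $S$ is either sad itself or forces a sad degree-$\le 2$ neighbour, which (because those neighbours lie in pairwise distinct $\pi$-pairs) means at most one such ``special'' vertex lies in $S$. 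A saturated literal, however, would drag two special vertices into $S$, so every literal in $S$ is sad too. This caps the number of saturated degree-$3$ players at $2$, contradicting the $\ge 3$ established at the outset. No case split on the type of $i$ is needed.

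Your propagation approach can be completed along the same lines, but outcome (b) is a red herring and should be dropped. In every branch the chase actually terminates in outcome (a): once one special deviator is found it pins down the unique non-deviator $\pi$-pair, and the saturation of any literal deviator then forces a second special vertex into $S$, which must also be a deviator and hence produces a second forced non-deviator from a different $\pi$-pair. This matters because outcome (b) --- a non-deviator in $S$ being strictly worse off --- does not contradict an \textsc{sns}-deviation, so relying on it would break the $\langle$\textsc{sns}$\rangle$ version of the lemma. Relatedly, your closing remark that the \textsc{sns} case uses ``strict $\{\toxic01,\toxic11\}$-toxicity'' is wrong: the \textsc{sns} hypothesis of Theorem~1 only adds (non-strict) $\{\toxic11\}$-toxicity. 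Since outcome (a) alone yields a direct contradiction (independent of the solution concept), the \textsc{sns} case needs no extra toxicity here at all --- which is exactly why the paper's proof makes no $\langle\cdot\rangle$-distinction in this lemma.
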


\begin{proof}
	Suppose $i$ is a matched deviator and ends up in a coalition $S\in\pi'$ which includes exactly 3 friends of $i$, called $j$, $k$, $\ell$ who are (necessarily) pairwise enemies. Thus at least 2 of the 3 friends, say $j$ and $k$, must deviate, and thus at least 1 of these 3 friends is a matched deviator, say $j$. By weak $\{ \toxic11, \toxic22 \}$-toxicity, $j$ must have at least 3 friends in $S$, including $i$. Apart from $i$, 1 more friend of $j$ must deviate, called $m$. Now one of $m$ and $k$ must be matched, so is a matched deviator, and hence must have 3 friends in $S$. Hence $S$ contains at least 3 players with 3 friends in $S$. Also $|S| \ge 5$, and hence $S$ cannot include matched deviators who only have 2 or fewer friends in $S$, by toxicity. Call a matched player in $S$ with at most 2 friends in $S$ \textit{sad}. Then $S$ consists of lonely players, players with 3 friends in $S$, and at most 2 sad players (who are non-deviators). If there are 2 sad players, they must be together in $\pi$.
	
	Now if a player of type $x_a$ is in $S$, then either $x_a$ is sad or $x_a'\in S$ and $x_a'$ is sad. Similarly for $x_b$. Further, if a player of type $c_1$/$c_4$/$c_7$ is in $S$, then either he is sad, or a neighbour of it is in $S$ and sad. Therefore at most 1 player of type $x_a$/$x_b$/$c_1$/$c_4$/$c_7$ can be in $S$. Suppose a literal player $x_1$ is in $S$. Then either $x_1$ is sad, or else both $c(x_1)\in S$ and $x_a\in S$, but we previously said that there is at most 1 of these players in $S$, so this is impossible. Thus all literal players in $S$ must be sad, and so there is at most 1 literal player in $S$. Hence there are at most 2 degree-3 players in $S$, contradicting our observation before that there are at least 3.
\end{proof}

Now suppose $i$ is a matched deviator (with $\{i,j\}\in \pi$). Then by the lemmas, $i$ must end up in a coalition $S\in\pi'$ which includes exactly 1 friend of $i$. By weak $\toxicc11$-toxicity, $S$ must be a pair, $S = \{i,k\}$, say. Since $\pi' \succ_i \pi$ because $i$ is a deviator, we must have $k >_i j$ by consistency on pairs. Now by inspection it is seen that in $\pi$ there is no edge $\{ i,j \}$ that is strictly better than $\pi$ for both $i$ and $j$. Hence $k$ cannot be strictly better off in $\{i,k\}$ than $k$ is in $\pi$. So $k$ is not a deviator, and hence is either lonely or matched to a deviator.

We now go through each type $i$ of matched player and show that there is no edge $\{i,k\}$ satisfying these conditions: any preferred edge (if any) involves a partner who is matched to a non-deviator.

\begin{itemize}
	\item $x_a''$ and $x_b''$, $x_a$ and $x_b$, and false variable occurrences are in a favourite edge, so none of them are deviators.
	\item $x_a'$ and $x_b'$ have a strictly preferred edge with $x_a$ and $x_b$ respectively, but these are matched to false variable occurrences who we know are non-deviators.
	\item True variable occurrences: all preferred edges are partnered with a non-deviating player.
	\item Those $c_1$, $c_4$, or $c_7$ players matched to their literal player are in their favourite edge so not deviating.
	\item Those $c_i$ players together with $c_{i+1}$ are either in their favourite edge or (if they have a literal friend) that friend is false, so together with a non-deviator.
	\item Those $c_i$ players together with $c_{i-1}$ such that $c_{i+1}$ is in a pair where both members are confirmed non-deviators are themselves then clearly non-deviators. Such $c_i$ players exist: namely those preceding a $c$-player matched with their literal. If we repeat this observation, we find that all matched $c$-players are non-deviators.
\end{itemize}

Hence (together with the lemmas above) no matched player is a deviator.

Now consider a lonely player $i$. If $i$ deviates, then $i$ ends up in a coalition $S\in\pi'$ consisting of lonely players and possibly 2 players that are in a pair in $\pi$ (because no matched players are deviators). If $S$ consists entirely of lonely players, then by \toxicc01-toxicity, $i$ is not better off, so won't deviate. Hence $S$ also contains 2 players in a pair. At least 1 of these 2 is enemies with $i$ and is thus worse off by weak \toxicc11-toxicity, so this is not an SIS-deviation. $\langle$With \toxicc11-toxicity, $i$ is not better off in $S$, so not an SNS-deviation.$\rangle$ 

Thus we conclude that no player is a deviator. Hence $\pi$ is SIS-stable $\langle$SNS-stable$\rangle$.

\subsubsection{Stable $\Rightarrow$ Satisfiable.}

Suppose $\pi$ is a core-stable partition of the game. We show that $\varphi$ is then satisfiable.

\begin{lemma}
	The $9$ players of a clause cannot all be together in the same coalition in $\pi$.
\end{lemma}

\begin{proof}
	If they were, then $\{c_2, c_3\}$ would block by weak \toxicc22-toxicity.
\end{proof}

\begin{lemma}
	For any given clause, at least 1 of its players must be together with their literal player in $\pi$.
\end{lemma}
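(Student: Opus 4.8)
The plan is to argue by contradiction. Fix a clause $c = \ell_1 \vee \ell_2 \vee \ell_3$ and a core-stable partition $\pi$, and suppose that none of $c_1, c_4, c_7$ shares its coalition with its literal, i.e.\ $\ell_1 \notin \pi(c_1)$, $\ell_2 \notin \pi(c_4)$, $\ell_3 \notin \pi(c_7)$. Since $\pi$ is core-stable it is individually rational (no singleton blocks), so by $\{0\text{-}1\}$-toxicity any coalition in which a clause player $c_i$ sits with an enemy but no friend is $\sim_{c_i}\{c_i\}$. I will use three facts about the nine clause players: the friendship graph restricted to them is the chordless $9$-cycle $c_1 c_2\cdots c_9 c_1$ (so vertices at cyclic distance $2$ are enemies, by the girth condition); $c_{i+1}>_{c_i}c_{i-1}$ for every~$i$; and, under the contradiction hypothesis, the only friends a clause player $c_i$ can have inside $\pi(c_i)$ are its cycle-neighbours $c_{i-1},c_{i+1}$.

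The heart of the argument is the structural claim that $\pi$, restricted to $\{c_1,\dots,c_9\}$, consists only of cycle-edges $\{c_i,c_{i+1}\}$, genuine singletons, and ``enemy-pairs'' $\{c_i,e\}$ with $e\notin F_{c_i}$ (which are $\sim_{c_i}\{c_i\}$ by IR and $\{0\text{-}1\}$-toxicity, hence ``effectively singletons''). Call such a coalition \emph{tame}. Suppose $\pi(c_i)=S$ is not tame for some clause player $c_i$. Since $c_i$ has at most two friends available, $S$ must contain an enemy of~$c_i$, and inspecting $S\cap\{c_1,\dots,c_9\}$ as a union of arcs of the $9$-cycle one locates a clause player $c_k\in S$ with exactly one friend $c_{k'}\in S$ and at least one enemy in~$S$. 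By weak $\{1\text{-}1\}$-toxicity, $\{c_k,c_{k'}\}\succ_{c_k}S$. Because $c_k\in S\neq\pi(c_{k'})$, the coalition $\pi(c_{k'})$ does not contain $c_k$ and (by the same analysis) is tame; since $c_k$ is weakly preferred by $c_{k'}$ to both its own cycle-neighbours and the literal-pair is excluded for $c_{k'}$ by hypothesis, consistency on pairs (or $\{0\text{-}1\}$-toxicity in the enemy-pair case) yields $\{c_{k'},c_k\}\succ_{c_{k'}}\pi(c_{k'})$. Then $\{c_k,c_{k'}\}$ core-blocks $\pi$. The one configuration that does not admit such a $c_k$ with an enemy inside $S$ is an arc of exactly three vertices $\{c_a,c_{a+1},c_{a+2}\}$; there I run the same reasoning on the endpoint $c_a$ and its \emph{outward} neighbour $c_{a-1}$ (in place of $c_{a+1}\in S$), obtaining the blocking pair $\{c_{a-1},c_a\}$. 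In every case we contradict core-stability, so the claim holds.

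The proof then closes by parity. The clause players covered by cycle-edges number an even amount, so at least one of the nine, say $c_i$, is effectively a singleton, whence $\{c_i,c_{i-1}\}\succ_{c_i}\pi(c_i)$ by consistency on pairs (and IR in the enemy-pair case). Its neighbour $c_{i-1}$ is not in a coalition with $c_i$; since $\pi(c_{i-1})$ is tame, $c_i$ is $c_{i-1}$'s strictly preferred neighbour, and the literal option for $c_{i-1}$ is excluded by hypothesis, consistency on pairs gives $\{c_{i-1},c_i\}\succ_{c_{i-1}}\pi(c_{i-1})$. Thus $\{c_i,c_{i-1}\}$ core-blocks $\pi$, contradicting core-stability; hence some clause player of $c$ must share its coalition with its literal.

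The main obstacle is the tameness claim. There is no axiom stating that having fewer enemies is better, so large or awkwardly shaped coalitions among the clause players cannot be eliminated by applying toxicity to $c_i$ itself; one must instead find a boundary player with a poor friend-to-enemy ratio and then verify, by a short case analysis over the possible shapes of that player's neighbour's coalition, that a two-player blocking coalition always appears. Keeping track of the few shapes that weak $\{1\text{-}1\}$- and $\{2\text{-}2\}$-toxicity do not directly forbid — notably the $3$-arc $\{c_{i-1},c_i,c_{i+1}\}$ and $4$-arcs, where interior vertices see no toxicity but endpoints do — is where the real work lies; once tameness is in hand, the parity finish is immediate.
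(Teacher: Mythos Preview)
Your overall plan---show that coalitions among the nine clause players are highly constrained, then finish by parity---is the same as the paper's. But the execution of the structural step contains a genuine error.

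You define $c_{k'}$ as ``the one friend of $c_k$ in $S$'', so $c_{k'}\in S$ and hence $\pi(c_{k'})=S$. In the very next sentence you write ``$c_k\in S\neq\pi(c_{k'})$'', which is a contradiction. There is no consistent reading: if $c_{k'}$ is the in-$S$ neighbour, then $\pi(c_{k'})=S$ and your later appeal to tameness of $\pi(c_{k'})$ is vacuous; if $c_{k'}$ is meant to be the out-of-$S$ neighbour, then your sentence ``exactly one friend $c_{k'}\in S$'' is wrong, and moreover your assertion ``$c_k$ is weakly preferred by $c_{k'}$ to both its own cycle-neighbours'' forces $c_{k'}=c_{k-1}$, which you never arrange. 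Even under the in-$S$ reading, the blocking pair $\{c_k,c_{k'}\}$ fails: in the $4$-arc $\{c_1,c_2,c_3,c_4\}$, the endpoint $c_1$ has one friend $c_2$ and two enemies, but $c_2$ has two friends and only one enemy, so neither weak $\{1\text{-}1\}$- nor weak $\{2\text{-}2\}$-toxicity lets you conclude $\{c_1,c_2\}\succ_{c_2}S$. Finally, the phrase ``(by the same analysis) is tame'' is circular: you are invoking tameness of $\pi(c_{k'})$ in the middle of the proof that all such coalitions are tame, with no induction or well-founded order in sight.

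The paper avoids all of this by not proving tameness at all. It shows only the weaker fact that no coalition contains three or more of $c_1,\dots,c_9$: if $S$ contains at least three of them, pick $c_i\in S$ with $c_{i-1}\notin S$ (possible since not all nine lie in $S$), so $c_i$ has at most one friend in $S$ and at least one enemy, whence any friend-edge beats $S$ for $c_i$. Now look at $c_{i-1}$: its coalition misses $c_i$ (which is in $S$) and misses its literal (by hypothesis), so it has at most the single friend $c_{i-2}$; by weak $\{1\text{-}1\}$-toxicity or $\{0\text{-}1\}$-toxicity together with consistency on pairs, $\{c_{i-1},c_{i-2}\}\pref_{c_{i-1}}\pi(c_{i-1})$, and since $c_i>_{c_{i-1}}c_{i-2}$ we get $\{c_{i-1},c_i\}\succ_{c_{i-1}}\pi(c_{i-1})$. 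Thus $\{c_{i-1},c_i\}$ blocks. The key move is to pair the boundary player $c_i$ with its \emph{outward} predecessor $c_{i-1}$---for whom $c_i$ is the top clause-neighbour---rather than with the inward neighbour in $S$. Once you know each coalition holds at most two clause players, your parity finish is exactly the paper's.
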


\begin{proof}
	Suppose not. Let $T = \{c_1,\dots,c_9\}$. Suppose 3 or more agents from $T$ are together in $S\in\pi$. Take an agent $c_i\in S$ such that $c_{i-1} \not\in S$ (this exists since $T\not\in\pi$). By weak $\{1,1\}$-toxicity, $c_i$ prefers any edge to $S$. Under $\pi$, $c_{i-1}$ can have at most one friend (because $c_i$ is committed to $S\not\ni c_{i-1}$ and we assumed that $c_{i-1}$ is not together with its literal player). Thus under $\pi$, $c_{i-1}$ is not better off than $\{ c_{i-2}, c_{i-1} \}$ (again by weak $\{1,1\}$-toxicity). It follows that $\{ c_{i-1}, c_i \}$ is blocking $\pi$, a contradiction. So at most two agents from $T$ are in the same coalition in $\pi$. Since $|T|$ is odd, there is a $c_i$ not together with any other agent from $T$ in $\pi$. By our assumption that $c_i$ has no literal friends, this $c_i$ has no friends in $\pi(c_i)$, so is not better off than being alone by $\{0,1\}$-toxicity. On the other hand, $c_{i-1}$ is not better off than $\{ c_{i-2}, c_{i-1} \}$. It follows that $\{ c_{i-1}, c_i \}$ is blocking $\pi$.
\end{proof}

\begin{lemma}
	No $c_i$ can have $3$ friends in $\pi$. A $c_i$ together with its literal has at most $1$ enemy in $\pi$.
\end{lemma}

\noindent
\textit{Proof.} Suppose $c_i$ does have 3 friends in $\pi$. 
\[
\xymatrix@R-1.7pc@C-1pc{
	&&& c_{i-2} \ar@{.}[dl] \\
	c_{i+1} \ar@{-}[dr] && c_{i-1} \ar@{-}[dl] \\
	& c_i \ar@{-}[dd] & \\
	\\
	& x &
}
\]
In particular, $c_i$ is together with $c_{i-1}$. If $c_i$ is also together with $c_{i-2}$, then each of $c_{i-1}$ and $c_{i-2}$ have at most 2 friends in $\pi$ (their degree is 2), but at least 2 enemies (the 2 other friends of $c_i$). Hence $\{ c_{i-1}, c_{i-2} \}$ blocks by weak $\{\toxic11, \toxic22\}$-toxicity. Hence $c_{i-2}$ is in a different coalition from $c_{i-1}$; thus $c_{i-2}$ has at most 1 friend in $\pi$. On the other hand $c_{i-1}$ has 1 friend but at least 2 enemies in $\pi$. Hence by weak \toxicc11-toxicity and consistency on pairs, $\{c_{i-1}, c_{i-2} \}$ blocks. So $c_i$ cannot have 3 friends in $\pi$.

Now to the second claim. We know $c_i$ has at most 2 friends in $\pi$. Suppose $c_i\in S\in \pi$, where $c_i$ has 2 enemies in $S$, and its literal player is part of $S$. By weak $\{\toxic11, \toxic22\}$-toxicity, $c_i$ prefers any edge to $S$. If $c_{i-1}\not\in S$, then $\{c_i, c_{i-1}\}$ blocks. If $c_{i-1}\in S$ but $c_{i-2}\not\in S$ then $\{c_i, c_{i-1}\}$ blocks. If both $c_{i-1}\in S$ and $c_{i-2}\in S$ then $|S|\ge 5$ ($S$ includes $c_i$, the literal, $c_{i-1}$, and two enemies of $c_i$), so that $\{c_{i-1},c_{i-2}\}$ blocks since they have at most 2 friends and at least 2 enemies. \qed

\begin{lemma}
	$x_1$ and $\overline x_1$ cannot both be together with their clause player.
\end{lemma}

\begin{proof}
	Suppose they are. Now if $x_a$ was together with both $x_1$ and $\overline x_1$ in $\pi$, this would mean that the associated clause players have more than 1 enemy which is impossible. So $x_a$ has at most 2 friends. But if $x_a$ had friends $x_a'$ and $x_1$ then the clause player of $x_1$ would have enemies $x_a$ and $x_a'$, which is impossible. So $x_a$ has either no or 1 friend in $\pi$.
	
	Suppose $x_a$ has as friend either $x_1$ or $\overline x_1$ in $\pi$. Then by weak \toxicc11-toxicity (since the clause player is an enemy), $\{x_a, x_a'\}$ blocks.
	
	Hence either $x_a$ has no friends in $\pi$, or its only friend is $x_a'$. Now if either $x_1$ or $\overline x_1$ has only 1 friend in $\pi$, then $\{x_a, x_1\}$ or $\{x_a, \overline x_1\}$ blocks. So both $x_1$ and $\overline x_1$ have 2 friends in $\pi$, so $x_1$ is together with $\overline x_2$ and $\overline x_1$ is together with $x_2$. Since the clause players of $x_1$ and $\overline x_1$ can have at most 1 enemy in $\pi$, it follows that $\overline x_2$ and $x_2$ have only 1 friend in $\pi$ (namely $x_1$ and $\overline x_1$ respectively). It follows that $x_b$ has at most 1 friend, namely possibly $x_b'$. Hence by weak \toxicc11-toxicity, $\{ x_b, x_2 \}$ blocks.
\end{proof}

\begin{lemma}
	$x_1$ and $\overline x_2$ cannot both be together with their clause player.
\end{lemma}

\begin{proof}
	Suppose they were. Since the clause players can have at most 1 enemy, $x_1$ and $\overline x_2$ cannot be in the same coalition in $\pi$. Now if $x_1$ and $\overline x_2$ both have only 1 friend (their clause player) in $\pi$, then $\{ x_1, \overline x_2 \}$ blocks by consistency on pairs. Otherwise, at least 1 of them, say $x_1$, has 2 friends in $\pi$, hence is together with $x_a$. Since $x_1$'s clause player can have at most 1 enemy, $x_a'$ is not in the same coalition. But then $\{x_a, x_a'\}$ blocks by weak \toxicc11-toxicity. 
\end{proof}

Define a propositional assignment $\mathcal A$ that sets literals that are in a coalition with their clause player true. By the last two lemmas, this is well-defined. By the lemma before, each clause has at least 1 literal that is set true by $\mathcal A$. Hence $\mathcal A$ satisfies $\varphi$.

\section{Proof of Theorem 2}
\subsection{Bipartite Case}

We call a collection of orderings $(\ge_i)_{i\in N}$ \textit{bipartite} if the friendship graph is bipartite, i.e.\ there is a partition $(N_1, N_2)$ of the agent set $N$ such that for each $i\in N_1$ we have $F_i\inn N_2$ and for each $i\in N_2$ we have $F_i \inn N_1$. The following theorem gives a hardness result even for bipartite preferences, and so it applies for example to the stable marriage case. This result works for Nash stability but not for individual stability (and it cannot, for individual stability is poly-time solvable for stable marriages). The next section will consider the non-bipartite case which also applies to individual stability.

\begin{theorem-non} 
	\textsc{ns-existence for \scC} is NP-complete if for all $N$
	and every \emph{strict bipartite} collection of orderings $(\ge_i)_{i\in N}$
	there is a game $\langle N, (\pref_i)_{i\in N} \rangle\in\C$
	that is consistent on pairs and strictly $\{\toxic01,\toxic11, \toxic25\}$-toxic with respect to $(\ge_i)_{i\in N}$.
\end{theorem-non}

The problem is in NP since a Nash-stable partition is a certificate: we can in polynomial time check for each player $i$ whether he wishes to deviate (this follows from the definition of a polynomially representable class).

We reduce from (3,B2)-SAT (each clause contains exactly 3 literals, each variable occurs exactly twice positively and twice negatively).

Given a formula $\varphi$ we denote
\begin{enumerate}
	\item[(a)] for any variable $x$ its four occurrences by $x_1$, $x_2$, $\overline x_1$, $\overline x_2$,
	\item[(b)] for any variable occurence $\ell$, $c(\ell)$ is the clause that $\ell$ occurs in.
\end{enumerate}

We will introduce 9 players for each variable
and 1 player for each clause.
\begin{align*}
N = &\{ x_{\text{stalker}}, x_{\text{main}}, x_{\text{pos}}, x_{\text{neg}}, x_{\text{garbage}} \mid x \in\on{Var}(\varphi) \}  \\
&\cup \{ x_1, x_2, \overline x_1, \overline x_2 \mid x \in\on{Var}(\varphi) \}  \\
&\cup \{ c \mid c \in\on{Clauses}(\varphi) \}.
\end{align*}

We take the following strict orderings, which only show the friends of each player. The last entry of player $i$'s list is strictly better than $i$ who is strictly better than all players not mentioned.
\begin{align*}
\color{red} x_{\text{stalker}} &:\quad x_{\text{main}} \\
\color{red} x_1 &:\quad c(x_1) > x_{\text{pos}} > x_{\text{main}} \\
\color{red} x_2 &:\quad c(x_2) > x_{\text{pos}} > x_{\text{garbage}} \\
\color{red} \overline x_1 &:\quad c(\overline x_1) > x_{\text{neg}} > x_{\text{main}} \\
\color{red} \overline x_2 &:\quad c(\overline x_2) > x_{\text{neg}} > x_{\text{garbage}} \\
\color{blue}c &:\quad \text{its three variable occurrences in any order} \\[-3pt]
\color{blue} x_{\text{main}} &:\quad x_1 > \overline x_1 \\
\color{blue}x_{\text{pos}} &:\quad x_1 > x_2 \\
\color{blue}x_{\text{neg}} &:\quad \overline x_1 > \overline x_2 \\
\color{blue}x_{\text{garbage}} &:\quad x_2 > \overline x_2.
\end{align*}
Notice that no one has more than 3 friends, and that these orderings are bipartite: all friends of a red player are blue, all friends of a blue player are red.

Let $G$ be any hedonic game that is consistent on pairs and strictly $\{\toxic01, \toxic11, \toxic25 \}$-toxic with respect to these orderings. This game has a Nash stable outcome if and only if $\varphi$ is satisfiable.
\vspace{-20pt}
\[  
\xymatrix{
	&&&\\
	c \ar@{}[dr]^(.07){}="a"^(.4){}="b" \ar@{-}_7 "a";"b"
	\ar@{}[ur]^(.07){}="a"^(.4){}="b" \ar@{-}^5 "a";"b"
	\ar@{-}[r]^6 & x_2 \ar@{-}[r]^2 \ar@{-}[d]_3 & x_{\text{garbage}} \ar@{-}[r]^1 & \overline x_2 \ar@{-}[d]^3 & c'  \ar@{}[dl]^(.07){}="a"^(.4){}="b" \ar@{-} "a";"b"
	\ar@{}[ul]^(.07){}="a"^(.4){}="b" \ar@{-} "a";"b"
	\ar@{-}[l] \\
	& x_{\text{pos}} \ar@{-}[d]_4  & & x_{\text{neg}} \ar@{-}[d]^4 \\
	c'' \ar@{}[dr]^(.07){}="a"^(.4){}="b" \ar@{-} "a";"b"
	\ar@{}[ur]^(.07){}="a"^(.4){}="b" \ar@{-} "a";"b"
	\ar@{-}[r] & x_1 \ar@{-}[r]_2 & x_{\text{main}} \ar@{-}[r]_1 & \overline x_1 & c''' \ar@{}[dl]^(.07){}="a"^(.4){}="b" \ar@{-} "a";"b"
	\ar@{}[ul]^(.07){}="a"^(.4){}="b" \ar@{-} "a";"b"
	\ar@{-}[l] \\
	&& x_{\text{stalker}} \ar[u] &&
}
\]

\subsubsection{Satisfiable $\Rightarrow$ NS-stable}
Suppose $\A$ is a satisfying assignment for $\varphi$. For a clause $c$, let $\on{true}(c)$ be the first variable occurrence in $c$ that is set true by $\A$ (this must exist since $\A$ satisfies $\varphi$).

We call a variable occurrence $\ell$ \textit{matched} if it is $\on{true}(c)$ for its clause $c$, and \textit{unmatched} otherwise.

We now describe a partition $\pi$ of $N$ that is Nash stable in $G$.
\begin{itemize}
	\item $\{x_{\text{stalker}}\}$ is always alone.
	\item $\{ c, \on{true}(c) \}$ forms a \textit{love marriage}.
	\item Suppose now that $x$ is true in $\A$.
	\begin{itemize}
		\item $\{ x_{\text{main}}, \overline x_1 \}$.
		\item $\{ x_{\text{neg}}, \overline x_2 \}$.
		\item If $x_1$ is matched and $x_2$ is matched, then $\{ x_{\text{pos}} \}$ and $\{x_{\text{garbage}}\}$.
		\item If $x_1$ is matched and $x_2$ is unmatched, then $\{ x_2, x_{\text{pos}} \}$ and $\{x_{\text{garbage}}\}$.
		\item If $x_1$ is unmatched and $x_2$ is matched, then $\{ x_1, x_{\text{pos}} \}$ and $\{x_{\text{garbage}}\}$.
		\item If $x_1$ is unmatched and $x_2$ is unmatched, then $\{ x_1, x_{\text{pos}} \}$ and $\{x_2, x_{\text{garbage}}\}$.
	\end{itemize}
	\item Suppose now that $x$ is false in $\A$. (everything is symmetric to the positive case)
	\begin{itemize}
		\item $\{ x_{\text{main}}, x_1 \}$.
		\item $\{ x_{\text{pos}}, x_2 \}$.
		\item If $\overline x_1$ is matched and $\overline x_2$ is matched, then $\{ x_{\text{neg}} \}$ and $\{x_{\text{garbage}}\}$.
		\item If $\overline x_1$ is matched and $\overline x_2$ is unmatched, then $\{ \overline x_2, x_{\text{neg}} \}$ and $\{x_{\text{garbage}}\}$.
		\item If $\overline x_1$ is unmatched and $\overline x_2$ is matched, then $\{ \overline x_1, x_{\text{neg}} \}$ and $\{x_{\text{garbage}}\}$.
		\item If $\overline x_1$ is unmatched and $\overline x_2$ is unmatched, then $\{ \overline x_1, x_{\text{neg}} \}$ and $\{\overline x_2, x_{\text{garbage}}\}$.
	\end{itemize}
\end{itemize}
The $\pi$ as above is Nash stable (note all coalitions have size at most 2, so it is a marriage matching). It is easily seen to be individually rational since no one is together with an enemy and consistency on pairs holds. Notice that because the underlying preferences are bipartite they are \textit{triangle-free} which means that no player ever wants to join a pair of players because at least one of them is an enemy and strict $\{\toxic01, \toxic11 \}$-toxicity holds. So any possible Nash deviation would involve a player joining a singleton coalition. The only players that are possibly single in $\pi$ are $x_{\text{pos}}, x_{\text{neg}}, x_{\text{garbage}}, x_{\text{stalker}}$. 
\begin{itemize}
	\item No one is friends with $x_{\text{stalker}}$ so no one can benefit by joining him.
	\item Player $x_{\text{pos}}$ is single only if both $x_1$ and $x_2$ are matched, and because they prefer their clause to $x_{\text{pos}}$, they do not benefit by joining $x_{\text{pos}}$. No one else is friends with $x_{\text{pos}}$, so they do not join either. 
	\item Similarly for $x_{\text{neg}}$.
	\item Player $x_{\text{garbage}}$ is single only in situations where his friends $x_2$ and $\overline x_2$ are together with either their clauses or the $x_{\text{pos}}, x_{\text{neg}}$ players. Both $x_2$ and $\overline x_2$ prefer this situation to joining $x_{\text{garbage}}$.
\end{itemize}
Hence no deviations are possible and thus $\pi$ is stable.

\subsubsection{NS-stable $\Rightarrow$ satisfiable} 

Suppose $\pi$ is NS-stable in $G$. 

\begin{lemma}
	All coalitions in $\pi$ have size 1 or 2.
\end{lemma}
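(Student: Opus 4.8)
The plan is to derive the claim from individual rationality (which \NS-stability entails) together with the sparsity of the friendship structure. Let $\Gamma$ be the undirected graph on $N$ with an edge $\{i,j\}$ whenever $j\in F_i$ or $i\in F_j$. From the construction one checks three facts about $\Gamma$: it is bipartite (red vertices on one side, blue on the other), it has maximum degree $3$, and it has girth at least $8$. For the girth bound, the point is that two occurrence vertices are never adjacent and two clause vertices are never adjacent (and no gadget vertex is adjacent to a clause vertex), so any cycle of length $<8$ would have to pass through exactly one clause vertex $c$, joining two of the three occurrences of $c$ by a path of length $2$ or $4$ lying inside a single variable gadget; inspecting the $8$-cycle $x_1\,x_{\text{main}}\,\overline x_1\,x_{\text{neg}}\,\overline x_2\,x_{\text{garbage}}\,x_2\,x_{\text{pos}}$ shows that the occurrence pairs at distance $2$ or $4$ on it consist of two occurrences of the same variable or of a literal and its negation, and none of these can appear together in a clause of a $(3,\text{B}2)$-\textsc{sat} instance.

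First I would record that, since every agent may move to $\emptyset$, \NS-stability forces $\pi$ to be individually rational: $\pi(i)\pref_i\{i\}$ for all $i$. Suppose for contradiction that some $S\in\pi$ has $|S|\ge 3$, and fix $i\in S$. If $i$ had at most one friend in $S$ then, because $|S|\ge 3$, it would have at least one enemy in $S$, so strict $\{\toxic01,\toxic11\}$-toxicity would give $\{i\}\succ_i S$, contradicting individual rationality. Hence every $i\in S$ has at least two friends in $S$, so $\Gamma[S]$ has minimum degree at least $2$ and therefore contains a cycle. Since $\Gamma$ has girth at least $8$, this forces $|S|\ge 8$.

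Next I would squeeze from the other side. If $i\in S$ had exactly two friends in $S$, then it would have $|S|-3\ge 5$ enemies in $S$, and strict $\{\toxic25\}$-toxicity would again give $\{i\}\succ_i S$, contradicting individual rationality. So every $i\in S$ has exactly three friends in $S$; since no agent has more than three friends, this means $F_i\subseteq S$ and $|F_i|=3$ for every $i\in S$. Thus $S$ contains no agent with at most two friends, so $S$ consists only of clause vertices and occurrence vertices. But every occurrence vertex (say $x_1$) has a friend that is neither a clause vertex nor an occurrence vertex (say $x_{\text{pos}}$), so $F_{x_1}\not\subseteq S$; hence $S$ contains no occurrence vertex, and then any clause vertex in $S$ would need all three of its friends—which are occurrence vertices—inside $S$, which is impossible. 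Therefore $S=\emptyset$, contradicting $|S|\ge 8$, and we conclude that every coalition of $\pi$ has size $1$ or $2$.

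The step I expect to be the main obstacle is making the girth-$\ge 8$ claim airtight: one must rule out $4$- and $6$-cycles by a careful but finite case analysis of how a cycle can alternate between clause vertices and variable gadgets, and this uses the (standard) assumption that each clause of the $(3,\text{B}2)$-\textsc{sat} instance mentions three distinct variables. Everything after that is routine bookkeeping with the three toxicity axioms and the degree bound.
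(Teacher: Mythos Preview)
Your proof is correct and follows essentially the same route as the paper's: individual rationality forces every member of a large coalition to have at least two friends inside it, the girth bound $\ge 8$ then forces $|S|\ge 8$, strict $\{\toxic25\}$-toxicity upgrades this to ``everyone has all three friends in $S$'', and the structural observation that only occurrence and clause vertices have three friends (with occurrence vertices always befriending a degree-$2$ gadget vertex) finishes it. The paper handles the girth claim with the same bipartiteness-plus-case-check you flag as the main obstacle, dispatching it in a footnote by directly excluding $4$- and $6$-cycles; your sketch would benefit from also explicitly ruling out short cycles through zero or at least two clause vertices, but you already identify this as the spot needing care.
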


\begin{proof}
	Remember that no player has more than 3 friends. Let $S\in \pi$ with $|S| \ge 3$. Then by toxicity and individual rationality of $\pi$, each player has either 3 friends in $S$, or 2 friends in $S$ but at most 4 enemies in $S$. So each member of $S$ has `degree' at least 2, and thus $S$ contains a cycle\footnote{We can use graph theory terminology by referring to a graph on $N$ with an edge between mutual friends. Note that since $x_{\text{stalker}}$ has only 1 friend, we must have $x_{\text{stalker}}\not\in S$, so we may pretend that all friendships are mutual so that all edges are indeed undirected.}. All cycles in $G$ have length 8 or more\footnote{Bipartiteness excludes cycles of odd length, so we need only check that there are no cycles of length 4 or 6. There are none. A shortest cycle is $x_{\text{main}} \to x_1 \to x_{\text{pos}}\to x_2 \to x_{\text{garbage}} \to \overline x_2 \to x_{\text{neg}} \to \overline x_1 \to x_{\text{main}}$ or cycles like $c \to x_1 \to x_{\text{pos}} \to x_2 \to c' \to y_2 \to y_{\text{pos}} \to y_1 \to c$.}, thus $|S| \ge 8$. If some member $i$ of $S$ had exactly 2 friends in $S$, then $i$ would have 5 enemies in contradiction to individual rationality by strict $\{\toxic25\}$-toxicity. Hence every $i\in S$ has exactly 3 friends in $S$. The only players who have 3 friends are variable occurrences $x_1,x_2,\overline x_1, \overline x_2$ and clauses $c$. If some variable occurrence were in $S$ then so would be its friends of types $x_{\text{main}}, x_{\text{pos}}, x_{\text{neg}}, x_{\text{garbage}}$ who themselves cannot have 3 friends. Hence $S$ can only contain clause players; but no two clause players are friends.
\end{proof}

Thus every player is either alone or together with exactly 1 friend. Consider $x_{\text{main}}$. If he is alone, then $x_{\text{stalker}}$ will want to join him. Since $\pi$ is Nash stable, this cannot happen. So $x_{\text{main}}$ is together with a friend, which is either $x_1$ or $\overline x_1$.

Define the following propositional assignment $\A$: 
\begin{align*}
\A(x) = \text{true} &\iff \{ x_{\text{main}}, \overline x_1 \} \in \pi \\
\A(x) = \text{false} &\iff \{ x_{\text{main}}, x_1 \} \in \pi
\end{align*}
By what we said above, this is well-defined. We will show that $\A$ satisfies $\varphi$.

Let $c$ be a clause. If $c$ is alone in $\pi$ then one of its literals joins $c$ (and is welcome to do so). Hence $c$ is not alone and thus together with one of its literals, say $\ell$. We show that $\ell$ is true under $\A$.

Suppose not and it is false. Of the two false literal occurrences of a variable, the first is in a pair with $x_{\text{main}}$ by definition of $\A$. Thus the second occurrence ($\ell$) is the one together with the clause; for concreteness suppose the situation is $\{ x_{\text{main}}, x_1 \}\in\pi$ and $\{ x_2, c \}\in \pi$. It then follows that $\{x_{\text{pos}}\}\in\pi$ because both friends of $x_{\text{pos}}$ are otherwise engaged. But then $x_1$ wants to join $x_{\text{pos}}$ (and is welcome to do so). This is a contradiction to $\pi$ being stable. Hence $\ell$ is true.

Therefore each clause contains a true literal under $\A$ and hence $\A$ satisfies $\varphi$. Thus $\varphi$ is satisfiable.

\subsection{Non-bipartite case}

\begin{theorem-non} 
	\textsc{ns-} and \textsc{is-existence for \scC} are NP-complete if for all $N$
	and every \emph{strict} collection of orderings $(\ge_i)_{i\in N}$
	there is a game $\langle N, (\pref_i)_{i\in N} \rangle\in\C$
	that is consistent on pairs and strictly $\{\toxic01,\toxic11, \toxic25\}$-toxic with respect to $(\ge_i)_{i\in N}$.
\end{theorem-non}

If we do not insist on bipartiteness, we can get a result for mutual preferences that also holds for IS. This is done by adding by a 9-gon. The argument that an NS-stable partition exists will be very similar to before, using triangle-freeness. Since the girth of the game continues to be at least 8, and since we have been careful that most deviations considered are actually IS-deviations, the remainder of the proof needs few adjustments.

In more detail:
\begin{align*}
N = & \{ x^1,\dots, x^9, x_{\text{stalker}}, x_{\text{main}}, x_{\text{pos}}, x_{\text{neg}}, x_{\text{garbage}} \mid x \in\on{Var}(\varphi) \} \\
&\cup \{ x_1, x_2, \overline x_1, \overline x_2 \mid x \in\on{Var}(\varphi) \} \\
&\cup \{ c \mid c \in\on{Clauses}(\varphi) \}.
\end{align*}

We take the following strict orderings:
\begin{align*}
x_{\text{main}} &:  x_{\text{stalker}} > x_1 > \overline x_1 \\
x_{\text{stalker}} &: x_{\text{main}} > x^1 \\
x^1 &: x_{\text{stalker}} > x^2 > x^9 \\
x^i &: x^{i+1} > x^{i-1} \quad\text{{\small with superscripts mod 9 ($i=2,\dots,9$)}}
\end{align*}
and everyone else as before. Given an assignment, the partition $\pi$ generated is the same as before, but instead of $\{x_{\text{stalker}}\} \in \pi$, we now take $\{x_{\text{stalker}}, x^1\}, \{x^2, x^3\}, \{x^4, x^5\}, \{x^6, x^7\}, \{x^8, x^9\}\in \pi$. Checking that this $\pi$ is NS proceeds as before.

Suppose the game has an IS partition $\pi$. Similar to before, all coalitions in $\pi$ have size 1 or 2. We can then see that we must have $\{x_{\text{stalker}}, x^1\} \in \pi$. Otherwise, there must be some single $\{x^i\}\in \pi$ and then $x^{i-1}$ will want to join $x^i$ and is welcome to do so. But now $x_{\text{main}}$ cannot be alone, else $x_{\text{stalker}}$ will want to join him and is welcome to do so. Hence $x_{\text{main}}$ is together with a friend. The rest of the argument can now proceed as before.

\section{Proof of Theorem 3}
\begin{theorem-non} 
	$\langle\textsc{sns-}\rangle$, $\llangle\textsc{ssns-}\rrangle$, \textsc{sis-}, \textsc{cr-} and \textsc{scr-existence for \scC} are 
	NP-hard if for all $N$
	and every collection of \emph{strict} orderings $(\ge_i)_{i\in N}$
	there is a game $\langle N, (\pref_i)_{i\in N} \rangle\in\C$
	that is consistent on pairs, $\{\toxic01 \}$-toxic, weakly $\{\toxic11,\toxic22,\toxic33\}$-toxic 
	
	$\langle \toxicc11$-toxic and weakly $\{\toxic21\}$-toxic$\rangle$, 
	
	$\llangle$strictly $\{\toxic01, \toxic11\}$-toxic and weakly $\{\toxic21\}$-toxic$\rrangle$, 
	
	intolerant in triangles, triangle-appreciating, and monotone on triangles with respect to $(\ge_i)_{i\in N}$.
\end{theorem-non}

We will prove all three statements in the theorem together. Proof by reduction from (3,B2)-SAT.

Given an instance $\varphi$ of (3,B2)-SAT, take the following set $N$ of agents, with 9 agents per clause and 8 agents per variable.
\begin{align*}
N = &\{ c_1,\dots,c_9 \mid c \in \text{Clauses}(\varphi) \} \\
&\cup \{ x_1, x_2, \overline x_1, \overline x_2, x_a, x_b,x_c,x_d \mid x \in\text{Var}(\varphi) \}.
\end{align*}
For a clause $c = \ell_1\lor\ell_2\lor\ell_3$, where $\ell_i$ is a variable occurrence, we will connect $c_1$ with $\ell_1$, $c_4$ with $\ell_2$, and $c_7$ with $\ell_3$. For $c_1$, $c_4$, $c_7$, we call the variable occurrence connected with them `its literal'. For a variable occurrence $\ell$, we will call the $c_i$ player connected with it `its clause player' and denote it by $c(\ell)$. We generate the following orderings, which only show the friends of each player. The last entry of player $i$'s list is strictly better than $i$ who is strictly better than all players not mentioned.

\begin{align*}
c_1 &: \ell_1 > c_2 > c_9 \\
c_4 &: \ell_2 > c_5 > c_3 \\
c_7 &: \ell_3 > c_8 > c_6 \\
c_i &: c_{i+1} > c_{i-1} \qquad\text{for $i \neq 1,4,7$ {\small with subscripts mod 9}} \\
x_1 &: x_a > x_b > \overline x_2 > c(x_1) \\
\overline x_1 &: x_b > x_a > x_2 > c(\overline x_1) \\
x_2 &: x_c > x_d > \overline x_1 > c(x_2) \\
\overline x_2 &: x_d > x_c > x_1 > c(\overline x_2) \\	  
x_a &: x_b > x_1 > \overline x_1 \\
x_b &: x_a > \overline x_1 > x_1 \\
x_c &: x_d > x_2 > \overline x_2 \\
x_d &: x_c > \overline x_2 > x_2 \\
\end{align*}

\begin{figure}[ht]
	\includegraphics{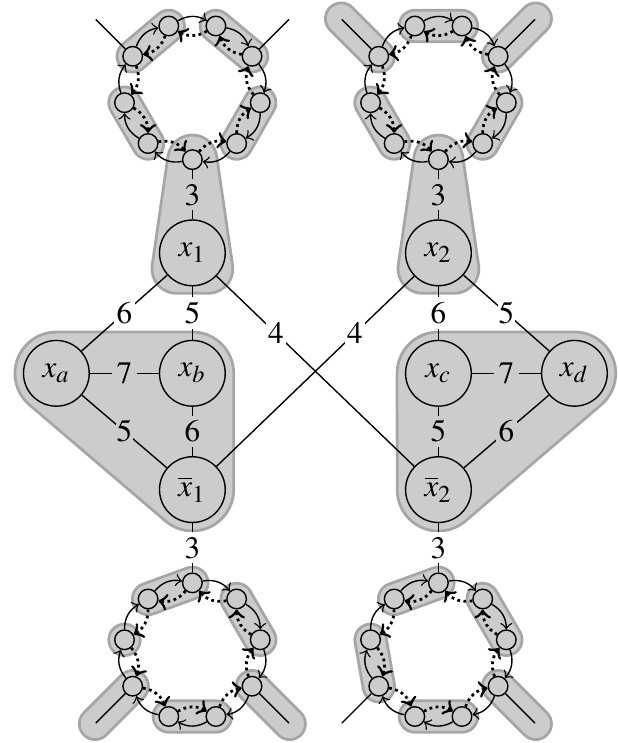}
\end{figure}

These orderings only show the friends of each player. The last entry of player $i$'s list is strictly better than $i$ who is strictly better than all players not mentioned.
Notice the following facts:
\begin{itemize}
	\item No player has more than 4 friends.
	\item The orderings are mutual: $j$ is $i$'s friend if and only if $i$ is $j$'s friend.
	\item The friendship graph has no chordless 4-cycles.
\end{itemize}
By the assumptions of the theorem, we can in polynomial time find a hedonic game which satisfies the relevant conditions as in the theorem statement. We show that $\varphi$ is satisfiable if and only if this game admits a core- and SIS-stable $\langle$SNS-stable$\rangle$ $\llangle$SSNS-stable$\rrangle$ partition.

\subsubsection{Satisfiable $\Rightarrow$ Stable.}

Suppose $\mathcal A$ is a satisfying assignment for $\varphi$. Construct the following partition $\pi$ of $N$, which consists of singletons, pairs, and triangles:
\begin{itemize}
	\item Match true variable occurrences with their clause player.
	\item If $x_1$ is false then $\{ x_1, x_a, x_b \}\in \pi$. If $\overline x_1$ is false then $\{ \overline x_1, x_a, x_b \}\in \pi$.
	\item If $x_2$ is false then $\{ x_2, x_c, x_d \}\in \pi$. If $\overline x_2$ is false then $\{ \overline x_2, x_c, x_d \}\in \pi$.
	\item For each clause, match the non-matched $c$-players in the same way as before.
\end{itemize}

Suppose that $\pi \overset H \longto \pi'$ for some $H\inn N$ with $\pi'\succ_i\pi$ $\llangle \pi'\pref_i\pi \rrangle$ for all $i\in H$. We will prove that this is not an SIS-deviation $\langle$not an SNS-deviation$\rangle$ $\llangle$not an SSNS-deviation$\rrangle$. Note that if there are no SIS-deviations, then $\pi$ is also core-stable.

Some terminology and observations:
\begin{itemize}
	\item an agent is \textit{lonely} if it is in a singleton in $\pi$ and \textit{matched} otherwise (these terms always refer to $\pi$ and never to $\pi'$). We also use `matched' for agents in triangles.
	\item an agent is a \textit{deviator} if it is in $H$, and a \textit{non-deviator} otherwise.
	\item no 2 enemies are matched in $\pi$.
	\item any 2 lonely players have distance 5, which in the following will mean that at most 1 player in a coalition considered below can be lonely. 
\end{itemize}

\begin{lemma}
	No matched deviator $i$ has exactly $2$ friends $j$ and $k$ in $\pi'$ where $j$ and $k$ are enemies.
\end{lemma}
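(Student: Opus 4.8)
The plan is to mirror the proof of Lemma~\ref{lemma:2-friends} in Theorem~\ref{thm:core-with-ties}. That argument exploited triangle-freeness, so that any two friends of a common vertex were automatically enemies; here the friendship graph contains triangles (the variable gadgets $\{x_1,x_a,x_b\}$ and so on), so the present lemma is stated only for the sub-case where the two friends $j,k$ of the matched deviator $i$ are enemies of each other. Write $S=\pi'(i)$, so $S\cap F_i=\{j,k\}$. I would argue by contradiction: supposing such an $i$ exists, show that $\pi\overset{H}{\longto}\pi'$ can be neither an SIS-deviation nor, under the stronger toxicity hypotheses, an SNS- or SSNS-deviation; the core statement then follows since a core-block yields an SIS-block.

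\emph{Bounding the size of $S$.} Since $i$ is matched and no two enemies are matched, $\pi(i)$ is an edge $\{i,p\}$ or one of the variable triangles, and in either case $\pi(i)\pref_i\{i,p\}$ where $p$ is the better neighbour of $i$ in $\pi(i)$ -- trivially for an edge, and for a variable triangle by a short computation using triangle-appreciating and monotone on triangles. As $i$ is a deviator, $S\pref_i\pi(i)\pref_i\{i,p\}$, strictly except in the SSNS case; hence if $|S\cap E_i|\ge 2$ then weak $\toxicc22$-toxicity gives $\{i,m\}\succ_i S$ for every friend $m$, a contradiction, so $|S\cap E_i|\le 1$ and $|S|\in\{3,4\}$. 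The engine of the remaining analysis is the following: if a \emph{matched} agent $a\in S$ has exactly one friend and at least one enemy in $S$, then weak $\toxicc11$-toxicity together with the ``$\pi(a)\pref_a$ some pair'' fact gives $\pi(a)\succ_a S=\pi'(a)$, so $a$ cannot be a deviator, and if $a$ is a non-deviator it is strictly worse off in $\pi'$, which -- since $a\in S=\pi'(i)$ with $i\in H$ -- already kills an SIS-deviation. A \emph{lonely} deviator $a$ with one friend and an enemy in $S$ is not strictly better off ($\{a\}\pref_a S$ by $\toxicc11$-toxicity), killing SNS, and strictly worse off ($\{a\}\succ_a S$ by strict $\toxicc11$-toxicity), killing SSNS. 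Also, since lonely players are pairwise far apart, $S$ contains at most one lonely agent.

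\emph{The cases $|S|=3$ and $|S|=4$.} If $|S|=3$ then $S=\{i,j,k\}$ is the path $j-i-k$; $j,k$ are enemies hence not together in $\pi$, so one of them, say $j$, is a deviator, whence $j$ is lonely by the engine, so $k$ (at distance $2$ from $j$) is matched and, again by the engine, not a deviator -- a matched non-deviator dragged into $S$ with its only friend $i$ and its enemy $j$, hence strictly worse off: not an SIS-deviation; the lonely deviator $j$ settles SNS/SSNS. If $|S|=4$ then $S=\{i,j,k,\ell\}$ with $\ell$ the unique enemy of $i$ in $S$ (and $i\in E_\ell$ by mutuality). Were $\ell$ to have no friend in $S$, then $\toxicc01$-toxicity would make $\ell$ no better than $\{\ell\}$, so $\ell$ would be a non-deviator (impossible in the SSNS case, being strictly worse off), and then $j$ and $k$, both enemies of $\ell$, would both be deviators and hence both lonely by the engine -- impossible. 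So $\ell$ has a friend in $S$, necessarily in $\{j,k\}$; it cannot be a friend of both, else $i-j-\ell-k-i$ is a chordless $4$-cycle, so WLOG $\ell$ is a friend of $k$ and an enemy of $j$ and $S$ is the path $j-i-k-\ell$. As $j,\ell$ are enemies, exactly one of them is a deviator (at least one; not both, else both lonely at distance $\le 3$); that deviator has a single friend and $\ge 2$ enemies in $S$, so is lonely, and the other endpoint is then not lonely (distance $\le 3$) and, by the engine, not a matched deviator, hence a matched non-deviator, strictly worse off and inside $\pi'(i)=S$: not an SIS-deviation, with the lonely deviator again settling SNS/SSNS.

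The step I expect to be the main obstacle is the $|S|=4$ bookkeeping: using ``no chordless $4$-cycle'' to pin $\ell$ to exactly one of $j,k$, and then cleanly tracking which endpoint of the resulting path is the lonely deviator and which is the strictly-worse-off matched non-deviator -- separately in each branch and for each of the three stability notions. A secondary technical point is establishing the ``matched $\Rightarrow\pi(a)\pref_a$ some pair'' fact for the variable triangles, which is precisely where triangle-appreciating and monotone on triangles are needed.
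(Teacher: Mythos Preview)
Your proposal is correct and follows the same approach as the paper, which simply says the proof is ``essentially identical to Lemma~\ref{lemma:2-friends}'' with ``no 4-cycles'' replaced by ``no chordless 4-cycles'' and with the SSNS case added.

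You actually go beyond the paper's one-liner in a useful way: you explicitly address the fact that in the Theorem~\ref{thm:strict-core-no-ties} construction, matched agents may sit in \emph{triangles} rather than just pairs, and you justify the ``engine'' inequality $\pi(a)\succ_a S$ for such agents via triangle-appreciating and monotonicity on triangles. The paper's terse reference leaves this implicit, so your extra care is appropriate. Two minor remarks: (i) in the $|S|=4$ case you argue that \emph{exactly} one of $j,\ell$ deviates, whereas the paper just picks one WLOG by symmetry of the endpoints of the path $j\text{--}i\text{--}k\text{--}\ell$; your version is fine but slightly more than needed. (ii) Your parenthetical for the SSNS sub-case when $\ell$ has no friends in $S$ is a bit opaquely worded; the intended point---that strict $\{\toxic01\}$-toxicity makes $\ell$ strictly worse off, hence still a non-deviator even under the weaker SSNS requirement---is correct.
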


\begin{proof}
	Essentially identical to Lemma \ref{lemma:2-friends}. Replace ``because there are no 4-cycles'' by ``because there are no chordless 4-cycles'', and $\llangle$modify remarks about SNS to also include the SSNS case$\rrangle$.
\end{proof}

\begin{lemma}
	No matched deviator $i$ has exactly $2$ friends $j$ and $k$ in $\pi'$ where $j$ and $k$ are friends.
\end{lemma}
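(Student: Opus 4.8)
The plan is to assume such an $i$ exists and derive a contradiction, using the explicit shape of $\pi$ together with the preceding lemma. First I would locate the triangle: since $j$ and $k$ are mutual friends and both lie in $F_i$, the set $\{i,j,k\}$ is a triangle of the friendship graph, and a quick inspection of the construction shows that every triangle is a variable-gadget triangle, i.e.\ one of $\{x_a,x_b,x_1\}$, $\{x_a,x_b,\overline x_1\}$ for some variable $x$, or one of the symmetric $\{x_c,x_d,x_2\}$, $\{x_c,x_d,\overline x_2\}$ --- the clause $9$-cycles, the literal--clause edges, and the edges $x_1\overline x_2$, $\overline x_1 x_2$ span no triangle. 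Relabelling, I may assume $\{i,j,k\}=\{x_a,x_b,\ell\}$ with $\ell\in\{x_1,\overline x_1\}$ and $i\in\{x_a,x_b,\ell\}$. Next I would bound the enemies in $S:=\pi'(i)$: if $|S\cap E_i|\ge 2$, then $i$ has $2$ friends and $\ge 2$ enemies in $S$, so weak $\toxicc22$-toxicity gives $\{i,m\}\succ_i S$ for all $m\in F_i$; since $\pi(i)$ is either a pair whose partner is a friend (no two enemies are matched in $\pi$) or a gadget triangle, and a gadget triangle beats every single-friend pair $\{i,m\}$ (by triangle-appreciating, then monotone-on-triangles, then consistency-on-pairs), this forces $\pi(i)\succ_i S$, contradicting $i\in H$. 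Hence $S=\{x_a,x_b,\ell\}$ or $S=\{x_a,x_b,\ell,z\}$ with $z\in E_i$; for $\SNS$ and $\SSNS$ the extra weak $\toxicc21$-toxicity also rules out $z$, so there $S=\{x_a,x_b,\ell\}$.

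The easy case is when $\mathcal A$ sets the occurrence $\ell$ false, so $\{x_a,x_b,\ell\}\in\pi$: if the extra enemy $z$ is present, intolerant-in-triangles gives $\pi(i)=\{x_a,x_b,\ell\}\succ_i S$; and if $S=\{x_a,x_b,\ell\}$ then $\pi'(i)=\pi(i)$, so $i$ does not strictly gain --- a contradiction for the $\Core$ and $\SIS$ readings (an $\SIS$-stable partition is core-stable), and for $\SSNS$ as well once we observe that a member of $H$ whose coalition is unchanged can be deleted from $H$ without destroying the reach relation or the block, so that we may assume w.l.o.g.\ that every member of $H$ strictly gains.

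The interesting case is $\ell$ true. Let $\ell'$ be the other (hence false) occurrence of $x$ among $\{x_1,\overline x_1\}$, so $\{x_a,x_b,\ell'\}\in\pi$. The orderings are designed so that $x_a$ and $x_b$ rank $x_1$ and $\overline x_1$ in opposite order; let $w\in\{x_a,x_b\}$ be the one with $\ell'>_w\ell$. Then monotone-on-triangles gives $\pi(w)=\{x_a,x_b,\ell'\}\succ_w\{x_a,x_b,\ell\}$ and intolerant-in-triangles gives $\{x_a,x_b,\ell\}\pref_w S$, so $\pi(w)\succ_w S=\pi'(w)$: the agent $w$ is strictly worse off in $\pi'$. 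If $w\in H$ this contradicts $\pi'\pref_w\pi$. If $w\notin H$: for $\Core$ and $\SIS$, $w$ is a hurt non-deviator sitting inside $\pi'(i)$ with $i\in H$, violating the $\SIS$-consent condition; for $\SNS$ and $\SSNS$, a short chase through the reach relation shows $w\notin H$ forces $\ell'\in H$, and then $\ell'$ --- matched in $\pi$ inside the gadget triangle but having lost both $x_a$ and $x_b$ in $\pi'$ --- has at most one friend in $\pi'$ (its only remaining friends being one clause player and one further variable occurrence, which cannot both join it without the preceding lemma applying to the matched deviator $\ell'$), hence is strictly worse off, contradicting $\ell'\in H$. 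The sole boundary configuration, $S=\{x_a,x_b,x_1,\overline x_1\}$ (which can occur only for $\Core$ and $\SIS$, and only with $i\in\{x_1,\overline x_1\}$), is handled the same way, pointing at the false occurrence among $\{x_1,\overline x_1\}$, which intolerant-in-triangles makes strictly worse off in $S$.

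The step I expect to be the main obstacle is keeping the four solution concepts aligned in the last case: the $\Core$/$\SIS$ versions must route every remaining ``worse-off agent'' through the $\SIS$-consent condition, whereas $\SNS$/$\SSNS$ first eliminate all enemies in $S$ via weak $\toxicc21$-toxicity and then lean on the reach relation and the preceding lemma, and $\SSNS$ in addition needs the ``prune indifferent deviators'' reduction. Everything else is local and parallels the proof of Lemma~\ref{lemma:2-friends}.
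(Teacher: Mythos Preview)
Your argument is correct and rests on the same two pillars as the paper's proof: bound $|S|$ to at most $4$ (at most $3$ for \SNS/\SSNS) via toxicity, then exhibit a worse-off agent inside $S$ and obtain a contradiction. The organization differs, however. The paper splits first by $|S|$; for $|S|=3$ it runs the reach-relation chase to $\ell'$ uniformly across all four solution concepts, and for $|S|=4$ (\SIS\ only) it does a sub-case analysis on how many members of the triangle the extra enemy befriends. You instead split by whether $\ell$ is true or false, and for the ``interesting'' case you use the \SIS-consent condition directly on $w$ for \Core/\SIS, reserving the chase to $\ell'$ only for \SNS/\SSNS. This buys you a cleaner treatment of the $|S|=4$ case---your single ``$w$ is hurt, consent fails'' step replaces the paper's three sub-cases---and you are more explicit both about invoking the preceding lemma to cap the number of friends of $\ell'$ in $\pi'$ and about the pruning needed for \SSNS\ (a point the paper glosses over when it simply asserts ``$x_1$ is true''). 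One minor imprecision: your pruning step yields ``every member of $H$ has changed coalition'', not ``every member strictly gains''; but since in the easy case you only need $\pi'(i)\neq\pi(i)$ to fail, the weaker conclusion suffices.
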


\textit{Proof.} Suppose not. By weak \toxicc22-toxicity, $i$'s coalition $S$ in $\pi'$ has size either 3 or 4. $\langle$By weak \toxicc21-toxicity, $i$'s coalition $S$ in $\pi'$ has size 3.$\rangle$

	\[ \xymatrix@R-1pc@C-1pc{ & x_1 \ar@{-}[dr] \ar@{-}[dl] & \\ x_a \ar@{-}[rr] \ar@{.}[dr] && x_b \ar@{.}[dl] \\ & \overline x_1 } \]
If $|S| = 3$, then $S$ must be a triangle of the form $\{ x_1, x_a, x_b \}$ where $x_1$ is true. Now one of $x_a$ and $x_b$ is made worse off in this triangle compared to the triangle $\{ x_1, x_a, x_b \}$ which is part of $\pi$ (by monotonicity on triangles), in this case $x_b$. Thus $x_b$ does not deviate. Since $\pi(\overline x_1) = \pi(x_b)$ but $\pi'(\overline x_1) \neq \pi'(x_b)$, $\overline x_1$ must deviate. But $\overline x_1$ is now strictly worse off by triangle-appreciation and monotonicity on triangles, a contradiction.

	\[ \xymatrix@R-1pc@C-1pc{ & i \ar@{-}[dr] \ar@{-}[dl] & \\ j \ar@{-}[rr] \ar@{.}[dr]|? && k \ar@{.}[dl]|? \\ & \ell } \]
Suppose $|S| = 4$, with $S = \{i,j,k,\ell\}$ with $\{i,j,k\}$ forming a triangle, and $i$ and $\ell$ being enemies. For this case we only have to worry about SIS-deviations. Suppose first that $\ell$ is friends with both $j$ and $k$. Now in $\pi$, either $\{i,j,k\}$ formed a coalition, in which case $S$ is worse for $i$ by intolerance in triangles, or $\{j,k,\ell\}$ formed a coalition. In the latter case, $\ell$ is made worse off by $i$ joining (using intolerance in triangles), so that the deviation is not SIS.

Hence $\ell$ is not friends with both $j$ and $k$. Let's condition on whether $\{i,j,k\}\in \pi$ or not. If $\{i,j,k\}\in \pi$, then $i$ is worse off in $S$ by intolerance in triangles. So suppose $\{i,j,k\}\not\in \pi$. If $\ell$ has no friends in $S$, then 1 of $i$, $j$, $k$ is member of a triangle in $\pi$ that is better than $\{i,j,k\}$, and thus is worse off in $S$ by intolerance in triangles, so no SIS. Similarly if $\ell$ has exactly 1 friend in $S$, say $j$ (it cannot be $i$ who has exactly 2 friends), then $j$ must be a literal player, and one of $i$ and $k$ was better off in $\pi$ (where they were in a better triangle than $\{i,j,k\}$, and without enemies), so this is not an SIS-deviation. \qed

\begin{lemma}
	No deviator has $3$ or $4$ friends in $\pi'$.
\end{lemma}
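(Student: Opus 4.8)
The plan is to reduce, case by case, to the two lemmas just proved, and to fall back on the toxicity and triangle axioms for the residual cases. First I would rule out \emph{lonely} deviators: the only agents lonely in $\pi$ are the clause-cycle agents $c_j$ with $j\notin\{1,4,7\}$, which have degree $2$ in the friendship graph, so a lonely deviator cannot have three friends anywhere. Hence the deviator $i$ in question is matched, and since it has at least three friends in $S:=\pi'(i)$ it must be an agent of degree $\ge 3$: a variable-occurrence agent $x_1,\overline x_1,x_2,\overline x_2$ (degree $4$), a helper $x_a,x_b,x_c,x_d$ (degree $3$), or a special clause agent $c_1,c_4,c_7$ (degree $3$). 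By the symmetries of the construction it suffices to treat $i\in\{x_1,x_a,c_1\}$.

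Next I would inspect, in each case, which friends of $i$ lie in $S$, using the following structural facts. The friendship graph has no chordless $4$-cycle. Among the four friends $x_a,x_b,\overline x_2,c(x_1)$ of $x_1$ the only edge is $\{x_a,x_b\}$, and $\overline x_2$ and $c(x_1)$ are enemies of $x_a$, of $x_b$, and of each other. The three friends $x_b,x_1,\overline x_1$ of $x_a$ induce a path centred at $x_b$, so $x_1$ and $\overline x_1$ are enemies. The three friends $\ell_1,c_2,c_9$ of $c_1$ are pairwise enemies, and $\ell_1$ is a variable-occurrence agent. In every case the goal is to locate a friend $j\in S$ of $i$ that is a matched deviator with exactly two friends in $\pi'(j)$ and to decide whether those two friends are mutual friends or mutual enemies, so that the corresponding one of the two preceding lemmas applies. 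When no such $j$ can be forced, I would instead use weak $\{\toxic11\}/\{\toxic22\}/\{\toxic33\}$-toxicity, together with triangle-appreciation, monotone-on-triangles and intolerance-in-triangles, to show that $j$ (or $i$ itself) is strictly worse off in $\pi'$ than in $\pi$; that immediately rules out an \SIS-deviation. For the \SNS\ and \SSNS\ variants I would instead make the strict loss fall on a member of $H$, tracking deviator/non-deviator status via the reachability condition (coalitions of $\pi$ are split, never merged, outside $H$) and invoking the extra axioms listed in the theorem ($\{\toxic11\}$-toxicity and weak $\{\toxic21\}$-toxicity for \SNS; strict $\{\toxic01,\toxic11\}$-toxicity and weak $\{\toxic21\}$-toxicity for \SSNS), exactly as in the previous two lemmas.

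I expect the degree-$4$ case $i\in\{x_1,\overline x_1,x_2,\overline x_2\}$ to be the main obstacle: there $S$ may contain up to five agents from the closed neighbourhood of $i$, and one must juggle which of them are deviators (forced by reachability) against which are non-deviators, and for some sub-configurations descend one further level to a friend of a friend before the earlier lemmas bite. The representative situation is that $x_1$ has all four friends in $S$, so the helper $x_a$ sees its two mutual friends $x_1$ and $x_b$ in $S$ alongside the two enemies $\overline x_2,c(x_1)$; weak $\{\toxic22\}$-toxicity makes $x_a$ unhappy there, and a short reachability argument forces $x_a$ to be a matched deviator whose set of friends inside $\pi'(x_a)$ is exactly the mutual pair $\{x_1,x_b\}$, contradicting the second preceding lemma --- unless $x_a$ is a non-deviator, a branch one closes using the triangle of $\pi$ containing $x_a$ together with the reachability constraint. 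Making this nested split finite and exhaustive, and verifying that every branch ends in a forbidden ``exactly two friends'' configuration or in a strict loss for an agent of the right kind, is where the real work lies; the degree-$3$ cases $i=x_a$ and $i=c_1$ are much shorter, since $i$ then has no spare friend to bring in and its neighbourhood in $S$ is essentially pinned down.
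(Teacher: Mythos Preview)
Your overall skeleton—reduce to matched deviators of degree $\ge 3$, split by agent type, handle \SNS/\SSNS\ by shifting the strict loss onto a member of $H$—matches the paper. But the core engine you propose is not the one the paper uses, and as stated it has a termination problem.

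You want to locate a friend $j\in S$ of $i$ that is a matched deviator with \emph{exactly two} friends in $S$, so as to invoke one of the two preceding lemmas. The difficulty is that nothing stops every candidate $j$ from also having three (or four) friends in $S$; you then recurse on $j$, and the recursion never closes on its own. Your worked example already shows this: when $i=x_1$ has all four friends in $S$, you claim $x_a$'s friends in $S$ are exactly the mutual pair $\{x_1,x_b\}$, but this requires $\overline x_1\notin S$, which you have not argued. If $\overline x_1\in S$ then $x_a$ has three friends in $S$ and you are back where you started, one step deeper. Similarly, your ``short reachability argument'' that $x_a$ must be a deviator does not go through: in the triangle $\{x_a,x_b,\overline x_1\}\in\pi$ it could be $\overline x_1$ (not $x_a$) who deviates away, leaving $x_a$ a non-deviator carried into $S$.

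The paper does not attempt this reduction at all. Instead it turns the cascade into the argument: each time a matched deviator is identified, toxicity (and, for the helper agents, intolerance in triangles) forces that deviator to have at least three friends in $S$, which injects genuinely new agents into $S$. After two or three such steps one obtains $|S|\ge 7$; then the original $i$ (who has at most three or four friends) has at least three enemies in $S$, and weak $\{\toxic33\}$-toxicity makes $i$ strictly worse off than a friendly pair, hence worse off than $\pi(i)$ (a pair or a triangle dominating that pair via triangle-appreciation). This is the whole proof in the $x_a$ and literal cases. The clause case $c_1$ is handled without the growth argument, by pinning the unique possibly-lonely neighbour and showing the other clause neighbour is then made worse off (for \SNS/\SSNS\ one pushes one step further using $\{\toxic11\}$-toxicity). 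What you list as your ``fallback'' branch—weak $\{\toxic33\}$-toxicity applied to $i$ itself—is in fact the main line; the missing idea is the explicit $|S|\ge 7$ growth argument that makes it fire.
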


\begin{proof}
	Suppose $i\in S \in \pi$ is a deviator where $S$ includes 3 or 4 friends. We condition on the type of $i$.
	
	\textit{$i$ is a clause player:} Say $i$ has name $c_i$. $c_i$'s 3 friends are pairwise enemies, so at least 2 of them deviate, including another clause player $j\in\{c_{i+1}, c_{i-1}\}$. If $j$ is matched then $j$ is worse off since there are 2 enemies in $S$ and $j$ is of degree 2, contradiction. So $j$ is single, and hence $j = c_{i-1}$. Then the other clause friend $c_{i+1}$ of $c_i$ is matched and now worse off so not an SIS deviation. With $\llangle$strict$\rrangle$ \toxicc11-toxicity, $c_{i-1}$ is not made better off unless $c_{i-1}$ has 2 friends ($c_i$ and $c_{i-2}$) in $S$. Since $c_{i+1}$ does not deviate, $c_{i-2}$ must deviate. But $c_{i-2}$ is matched and has enemies $c_i$ and $c_{i+1}$, so must have 3 friends which is impossible since $c_{i-2}$ has degree 2, contradiction.
	
	\textit{$i$ is a player of type $x_a$, $x_b$, $x_c$, or $x_d$:} Say $x_a$, and $\pi(x_a) = \{ x_a, x_b, \overline x_1 \}$. By assumption, $x_a$ has 3 friends in $S$, so both $x_1$ and $\overline x_1$ are in $S$.
	
	\begin{itemize}
		\item By intolerance in triangles, this cannot be an SIS deviation unless $\overline x_1$ has at least 3 friends in $S$, so there is an extra friend $j\in S$:
		\[ \xymatrix@R-1pc@C-1pc{ & x_1 \ar@{-}[dr] \ar@{-}[dl] & \\ x_a \ar@{-}[rr] \ar@{-}[dr] && x_b \ar@{-}[dl] \\ & \overline x_1 \ar@{-}[d] \\ & j } \]
		\item $\langle$By weak \toxicc21-toxicity, this cannot be an SNS or SSNS deviation unless 1 of $x_1$ or $\overline x_1$ has at least 3 friends in $S$ (since one of them deviates). Say $\overline x_1$ has extra friend $j\in S$, also giving us the picture above.$\rangle$
	\end{itemize} 
	
	Now both $x_1$ and $j$ have at least 2 enemies in $S$, and one of them must deviate, hence have at least 3 friends in $S$. If this is $j$, then $|S| \ge 7$ and $x_a$ is unhappier by weak \toxicc33-toxicity, contradiction. So $j$ doesn't deviate and $x_1$ has an extra friend in $S$, say $k$. Since $j$ doesn't deviate, $k$ does deviate. Since $x_1$ has no lonely friends, $k$ is matched. Hence $k$ has 3 friends in $S$. Then $|S| \ge 7$ and $x_a$ is unhappier by weak \toxicc33-toxicity, contradiction. 
	
	\textit{$i$ is a literal player, say $x_1$}: Suppose first that $S$ includes $\overline x_2$ and $c(x_1)$, so that there are 3 friends of $x_1$ in $S$ that are pairwise enemy, and hence 2 friends who must be deviating. Of those 2, at most 1 is of the $x_a$ or $x_b$ kind; such a player must have 3 friends in $S$ by intolerance in triangles. All other types must have 3 friends in $S$ by toxicity. The 2 deviating friends of $x_1$ thus together contribute at least 3 extra friends to $S$, and hence $|S| \ge 7$. Then at least 1 of the deviating friends of $x_1$ has 3 friends in $S$ but also 3 enemies, and is thus worse off in $S$, a contradiction.
	
	Suppose otherwise that $S$ includes exactly 3 friends of $x_1$, including $x_a$ and $x_b$, and also $j\in \{\overline x_2, x(c_1)\}$. Suppose $x_a$ and $x_b$ do not deviate. Then everyone else (except possibly $\overline x_1$) in $S$ must deviate; in particular $j$ must deviate and hence have 3 friends in $S$ (which means 2 extra agents in $S$), who each must also deviate, which means 1 extra agent for each of the 2 extra agents in $S$. Hence $|S| \ge 7$ and $x_1$ is worse off in $S$ by weak \toxicc33-toxicity, a contradiction. Otherwise, at least 1 of $x_a$ and $x_b$ is deviating. Now unless $\overline x_1\in S$, by intolerance in triangles one of $x_a$ or $x_b$ is worse off, preventing this from being an SIS deviation. Otherwise by weak \toxicc21-toxicity, the deviator from $x_a$ and $x_b$ needs 3 friends. Either way we conclude $\overline x_1 \in S$. Now at least 1 of $\overline x_1$ and $j$ must be deviating, and thus must have 3 friends in $S$. If $j$ is deviating then $|S|\ge 7$. If $\overline x_1$ is deviating, then either its extra friend or $j$ is deviating, bringing $|S|\ge 7$, so that $x_1$ is worse off by weak \toxicc33-toxicity.
\end{proof}

Now suppose $i$ is a matched deviator. Then by the lemmas, $i$ must end up in a coalition $S\in\pi'$ which includes exactly 1 friend of $i$. By weak $\toxicc11$-toxicity, $S$ must be a pair, $S = \{i,k\}$, say.

We now go through each type $i$ of matched player and show that $i$ does not deviate into a pair, and is thus not a deviator.

\begin{itemize}
	\item By triangle-appreciation, $x_a$, $x_b$, $x_c$, $x_d$, and false variable occurrences are strictly better off than in any pair, so none of them are deviators.
	\item True variable occurrences: all preferred players are in triangles consisting of non-deviators.
	\item $c_i$ players don't deviate for the same reason as before.
\end{itemize}

Hence (together with the lemmas above) no matched player is a deviator.

Now consider a lonely player $i$. If $i$ deviates, then $i$ ends up in a coalition $S\in\pi'$ consisting of lonely players and possibly 2 players that are in a pair in $\pi$ or 3 players in a triangle in $\pi$ (because no matched players are deviators). If $S$ consists entirely of lonely players, then by $\llangle$strict$\rrangle$ \toxicc01-toxicity, $i$ is $\llangle$worse off$\rrangle$ not better off, so won't deviate (SCR: no-one is strictly better off). Hence $S$ also contains a pair or triangle. At least 1 of these 2 or 3 players is enemies with $i$ and is thus worse off by weak \toxicc11-toxicity or intolerance in triangles (since all lonely players are enemies to triangles), so this is not an SIS-deviation. $\langle$With $\llangle$strict$\rrangle$ \toxicc11-toxicity, $i$ is not better off in $S$, so not an SNS-deviation $\llangle$SSNS-deviation$\rrangle$.$\rangle$ 

Thus we conclude that no player is a deviator. Hence $\pi$ is SIS-stable $\langle$SNS-stable$\rangle$ $\llangle$SSNS-stable$\rrangle$.

\subsubsection{Stable $\Rightarrow$ Satisfiable.}

Suppose $\pi$ is a core-stable partition of the game. We show that $\varphi$ is then satisfiable. The first 3 lemmas are proved exactly as before.

\begin{lemma}
	The $9$ players of a clause cannot all be together in the same coalition in $\pi$.
\end{lemma}

\begin{lemma}
	For any given clause, at least $1$ of its players must be together with their literal player in $\pi$.
\end{lemma}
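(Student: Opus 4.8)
The plan is to reproduce, essentially verbatim, the argument used for the corresponding lemma in the proof of Theorem~1: the clause gadget $T=\{c_1,\dots,c_9\}$ and the orderings restricted to it are identical in both reductions, and the only features of $\pi$ the argument uses are \Core-stability together with $\{\toxic01\}$-toxicity, weak $\{\toxic11\}$-toxicity, and consistency on pairs (none of the triangle conditions play any role here). Fix a clause $c$ and assume for contradiction that no player of $c$ shares its coalition with its literal player; the goal is to exhibit a \Core-blocking pair inside $T$.

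First I would rule out large clumps of clause players: no coalition $S\in\pi$ contains three or more members of $T$. By the preceding lemma $T\not\subseteq S$, so $S\cap T$ is a proper, non-empty arc of the $9$-cycle $T$ (whose edges are the $c_ic_{i+1}$), and hence there are consecutive indices with $c_{i-1}\notin S$ and $c_i\in S$. The only possible friend of $c_i$ lying in $S$ is $c_{i+1}$, because its other neighbour $c_{i-1}$ is outside $S$ and its literal, if it has one, is not in $S=\pi(c_i)$ by our standing assumption; thus $|S\cap F_{c_i}|\le 1$, and since $|S|\ge 3$ forces $|S\cap E_{c_i}|\ge 1$, weak $\{\toxic11\}$-toxicity (or $\{\toxic01\}$-toxicity together with consistency on pairs when $|S\cap F_{c_i}|=0$) gives $\{c_{i-1},c_i\}\succ_{c_i}\pi(c_i)$. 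Symmetrically $c_{i-1}$ has at most one friend in $\pi(c_{i-1})$, namely $c_{i-2}$, since $c_i\notin\pi(c_{i-1})$ and its literal is not with it; as $c_i>_{c_{i-1}}c_{i-2}$ in $c_{i-1}$'s ordering, the same two toxicity conditions and consistency on pairs give $\{c_{i-1},c_i\}\succ_{c_{i-1}}\pi(c_{i-1})$, the strictness coming from strictness of the orderings. Hence $\{c_{i-1},c_i\}$ \Core-blocks~$\pi$, a contradiction.

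Next I would invoke parity: every coalition of $\pi$ now meets $T$ in at most two players, so since $|T|=9$ is odd some $c_i$ satisfies $\pi(c_i)\cap T=\{c_i\}$. Then $c_i$ has no friend at all in $\pi(c_i)$ (its neighbours lie in $T$ and its literal, if any, is elsewhere), so $\{\toxic01\}$-toxicity and consistency on pairs give $\{c_{i-1},c_i\}\succ_{c_i}\pi(c_i)$; and exactly as above $c_{i-1}$ has at most the friend $c_{i-2}$ in $\pi(c_{i-1})$ and so $\{c_{i-1},c_i\}\succ_{c_{i-1}}\pi(c_{i-1})$. Thus $\{c_{i-1},c_i\}$ again \Core-blocks~$\pi$, the final contradiction, and some player of $c$ must be matched to its literal.

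I do not expect any genuine obstacle: the one point needing a little care is the treatment of the three distinguished players $c_1,c_4,c_7$, which have a literal among their friends, but the bounds "$c_i$ has at most one friend in its coalition" used above hold for them too, immediately from the hypothesis that no $c_i$ is paired with its literal. For the \SNS-, \SSNS- and \SIS- versions of the theorem nothing extra is needed, since by the implication diagram a stable partition is in particular \Core-stable, so the same blocking pair finishes the argument.
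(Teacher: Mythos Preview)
Your proof is correct and essentially identical to the paper's own argument, which indeed simply refers back to the Theorem~1 proof (``proved exactly as before''). One tiny slip: $S\cap T$ need not be a single arc of the $9$-cycle, but the conclusion you actually use---that some $c_i\in S$ has $c_{i-1}\notin S$---follows anyway from $S\cap T$ being a proper non-empty subset of the cycle.
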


\begin{lemma}
	No $c_i$ can have $3$ friends in $\pi$. A $c_i$ together with its literal has at most $1$ enemy in $\pi$.
\end{lemma}

\begin{lemma}
	$x_1$ and $\overline x_1$ cannot both be together with their clause player.
\end{lemma}

\begin{proof}
	Suppose they are. Note that if either $x_1$ or $\overline x_1$ was together with either of $x_a$ or $x_b$ (cannot be together with both) then $\{x_a, x_b\}$ blocks. But then by triangle-appreciating and monotonicity $\{ x_1, x_a, x_b \}$ blocks.
\end{proof}

\begin{lemma}
	$x_1$ and $\overline x_2$ cannot both be together with their clause player.
\end{lemma}

\begin{proof}
	Suppose they are. Since the clause players can have at most 1 enemy, $x_1$ and $\overline x_2$ cannot be in the same coalition in $\pi$. Also $x_1$ cannot be together with either $x_a$ or $x_b$ since then $\{x_a, x_b\}$ blocks, and similarly $\overline x_2$ cannot be together with either $x_c$ or $x_d$ since then $\{x_c, x_d\}$ blocks. Hence $\{ x_1, \overline x_2 \}$ blocks by consistency on pairs.
\end{proof}

Define a propositional assignment $\mathcal A$ that sets literals that are in a coalition with their clause player true. By the last two lemmas, this is well-defined. By the lemma before, each clause has at least 1 literal that is set true by $\mathcal A$. Hence $\mathcal A$ satisfies $\varphi$.

\section{Class Properties}

In this section, we check in more detail that the conditions of our theorems are satisfied for various classes. All of these are routine.

Throughout we may assume $N\neq\emptyset$ and so $n\ge 1$, because all conditions are satisfied vacuously by the empty hedonic game. It is useful to note that strict $\tox31$-toxicity implies intolerance in triangles, so that we do not need to check this condition in all cases.
\begin{itemize}
\item \textbf{IRCL}

Given $N$ and $\orderings$ with friend sets $(F_i)_{i\in N}$, run the following algorithm producing IRCL lists.
\begin{enumerate}
	\item List $\{ (j,k) \in F_i\times F_i : j >_i k$ or $j\sim_i k$ and $j$ comes earlier than $k$ in listing of $N\}$.
	\item Sort this list according to $(j,k) \gg (j', k')$ iff $j\ge_i j'$ and $k >_i k'$. Break ties arbitrarily.
	\item Output this list with entries written as triangles $\{i,j,k\}$, any two entries separated by $\succ_i$.
	\item\label{step:irclpairs} Output $F_i\cup\{i\}$, written as pairs $\{ i,j \}$, with $>_i$ replaced by $\succ_i$ and $\sim_i$ replaced by $\sim_i$.
	\item End of output.
	\end{enumerate}
	Clearly this algorithm terminates in polynomial time. The game described by the output is triangle-appreciating in all senses, because all friend-triangles come before all other coalitions. By step~\ref{step:irclpairs}, the game is consistent on pairs. Because no coalition including an enemy is listed, they are not individually rational, so strict $\tox{k}{1}$-toxicity is satisfied.
	
\item	\textbf{Stable Roommates}
	
	This is the game produced by the IRCL-algorithm when we start it in step~\ref{step:irclpairs}. So it is consistent on pairs, and strict $\tox{k}{1}$-toxicity is satisfied.
	
	Neither of the triangle conditions is satisfied.
	
\item 	\textbf{$\mathcal W$-Games}
	
	\textit{Consistency on pairs.} For agents $j,k\in F_i\cup\{i\}$, by definition $\W_i(\{i,j\}) = j$ [for both cases $j\neq i$ and $j=i$], and so $\{i,j\} \pref_i \{i,k\}$ iff $\W_i(\{ i,j \}) \ge_i \W_i(\{i,k\})$ iff $j\ge_i k$.
	
	\textit{Strict $\tox k1$-toxicity.} If $S$ includes an enemy $e\in E_i$, then $\W_i(S) \le_i e <_i i = \W_i(\{i\})$, so $S \prec_i \{i\}$.
	
	We do not have triangle-appreciating. We do have monotonicity on triangles, but this is irrelevant.
	
\item	\textbf{$\mathcal{WB}$-Games}
	
	The analysis is very similar to the case of $\W$-games.
	
\item	\textbf{Additively Separable Games}
	
	Use $\utilties{-(n^2+2n),4}$-utilities.
	
	\textit{Consistency on pairs.} $\{i,j\} \pref_i \{i,k\}$ iff $v_i(j) \ge_i v_i(k)$ iff $j \ge_i k$.
	
	\textit{Strict $\tox k1$-toxicity.} Suppose $S$ contains an enemy $e\in E_i$. Then $v_i(S) = \sum_{j\in S} v_i(j) \le (n-2) \times (n+4) - (n^2 + 2n) = -8 < 0$, so $S \prec_i \{i\}$.
	
	\textit{Triangle-appreciating.}
	If $j,k\in F_i$ distinct with $j\ge_i k$ then $v_i(\{i,j,k\}) = v_i(j) + v_i(k) > v_i(j) = v_i(\{ i,j \})$ since $v_i(j) \ge 4 > 0$ in $\utilties{-(n^2+2n),4}$-utilities.
	
	\textit{Monotone on triangles.} Suppose $j,j',k,k'\in F_i$ are such that $j \ge_i j' >_i k >_i k'$. Then $v_i(\{i,j,k\}) = v_i(j) + v_i(k) > v_i(j') + v_i(k') = v_i(\{i,j',k'\})$.
	
\item	\textbf{Hedonic Coalition Nets}
	
	We essentially encode the additively separable game with $\utilties{-(n^2+2n),4}$-utilities from above as a hedonic coalition net. Write $E_i = \{ e^i_1,\dots, e^i_k \}$. Then use the net
	\begin{align*}
	j &\longmapsto_i v_i(j) \qquad \text{for friends $j\in F_i$,} \\
	e^i_1\lor \dots\lor e^i_k &\longmapsto_i -(n^2 + 2n)
	\end{align*}
	As noted in the paper, $|F_i| \le 4$ for all $i$, so the net above uses at most 5 formulas per agent. We can verify the properties exactly as we did for the Additively Separable Game above; toxicity goes through since in our check we only used the presence of a single enemy.
	
	All weights we used in the net are of size polynomial in $n$. Since we always have an explicit list of $N$ as input to our algorithms, we have $n$ available in unary, so we are allowed to write the weights in unary.
	
\item	\textbf{Fractional Hedonic Games}
	
	Use $\utilties{-(n^2+5n),7}$-utilities.
	
	\textit{Consistent on Pairs.} For $j,k\in F_i\cup\{i\}$, $\{i,j\} \pref_i \{i,k\}$ iff $v_i(j)/2 \ge v_i(k)/2$ iff $j \ge_i k$.
	
	\textit{Strictly $\tox k1$-toxic.} Suppose $S$ contains an enemy $e\in E_i$. Then $v_i(S) = 1/|S| \sum_{j\in S} v_i(j) \le \sum_{j\in S} v_i(j) \le (n-2) \times (n+7) - (n^2+5n) = -14 < 0$.
	
	\textit{Triangle appreciating.} Let $j, k\in F_i$ be distinct with $j\ge_i k$ and satisfying the closeness condition, which implies $v_i(j) - v_i(k) \le 2$. Then $v_i(\{i,j,k\}) = (v_i(j) + v_i(k))/3 \ge (2v_i(j) -2)/3 = \frac23 v_i(j) - \frac23 > v_i(j)/2 = v_i(\{i,j\})$ because $v_i(j) \ge 5$ by choice of utilities.
	
	\textit{Monotone on triangles.} Suppose $j,j',k,k'\in F_i$ are such that $j \ge_i j' >_i k >_i k'$. Then $v_i(\{i,j,k\}) = (v_i(j) + v_i(k))/3 > (v_i(j') + v_i(k'))/3 = v_i(\{i,j',k'\})$.
	
\item	\textbf{Social FHGs.}
	
	Use $\utilties{0,7n}$-utilities.
	
	Note that because $7n \ge 7$, all the `positive properties' hold as they did for straight FHGs. We only need to check the negative properties.
	
	\textit{$\tox01$-toxic.} A coalition $S$ in which $i$ only has enemies obtains value 0.
	
	\textit{Weakly $\tox11$-toxic.} Let $S$ be $\tox11$. Note that then $|S| \ge 3$. So $v_i(S) = \frac1{|S|}\sum v_i(j) \le \frac13 8n < \frac12 7n$, where $\frac12 7n$ is the minimal utility obtained in pairs.
	
	\textit{Weakly $\tox22$-toxic.} Follows from $\frac25 8n < \frac12 7n$.
	
	\textit{Weakly $\tox33$-toxic.} Follows from $\frac37 8n < \frac12 7n$.
	
	\textit{Intolerant in triangles.} Follows because $(a+b)/3 > (a+b)/n$ for all $n>3$.
	
\item	\textbf{Median Games.}
	
	Use $\utilties{0, 5}$-utilities.
	
	\textit{$\tox01$-toxic.} Let $S$ be $\tox01$. Then all utilities in $S$ are 0, so it is evaluated at 0, so $\{i\} \pref_i S$.
	
	\textit{Weakly $\tox kk$-toxic.} Let $S$ be $\tox kk$. Then the median value corresponds to a value or average of $v_i(i)=0$ or $v_i(\text{enemy}) = 0$, so is 0. On the other hand, in a pair with a friend, $i$ obtains utility at least $5/2 > 0$ by choice of utilities. Hence $\{i,j\} \succ_i S$ for friends $j$.
	
	\textit{Triangle appreciating.} Let $j, k\in F_i$ be distinct with $j\ge_i k$ and satisfying the closeness condition, which implies $v_i(j) - v_i(k) \le 2$. Then $v_i(\{i,j,k\}) = v_i(k) \ge v_i(j) - 2 > v_i(j)/2 = v_i(\{i,j\})$ because $v_i(j) \ge 5$ by choice of utilities.
	
	\textit{Monotone on triangles.} Suppose $j,j',k,k'\in F_i$ are such that $j \ge_i j' >_i k >_i k'$. The median value of $\{i,j,k\}$ is $k$, the median value of $\{i,j',k'\}$ is $k'$. Hence $\{i,j,k\}\succ_i \{i,j',k'\}$.
	
	\textit{Intolerance in triangles.} Easy to see under the assumption that preferences are strict.
	
\item	\textbf{Midrange.}
	
	Use $\utilties{-3n,1}$-utilities.
	
	\textit{Consistent on pairs.} The midrange of a pair $\{i,j\}$ is (depending on some choices) $v_i(j)$ or $v_i(j)/2$.
	
	\textit{Strict $\tox k1$-toxicity.} Suppose $S$ contains an enemy $e\in E_i$. Then $v_i(S) = \frac12\mathcal B + \frac12\W = \frac12\mathcal B - \frac12 3n \le \frac12 (n+1 -3n) = -n + \frac12 < 0$ since $n\ge 1$.
	
\item	\textbf{$\ell$-Approval.}
	
	Assume $\ell\ge 4$, and use $\llbracket -6\ell n,4\rrbracket$-utilities.
	
	\textit{Consistent on pairs.} A pair is valued with the utility of the partner.
	
	\textit{Strictly $\tox k1$-toxic for $k<\ell$.} Suppose $S$ is $\tox k1$ with $k < \ell$. Then $v_i(S) = Y_1 + \dots + Y_k + Y_{k+1} = Y_1 + \dots + Y_k -6\ell n \le k(n+4) - 6\ell n < \ell(n+4) - 6\ell n = \ell (-5n + 4) < 0$ since $\ell\ge 0$ and $n\ge 1$.
	
	The checks of the remaining conditions are identical to the case of additively separable games, noting that there we never sum more than the first 3 entries.
\end{itemize}

\end{document}